\documentclass{article}

\usepackage{PRIMEarxiv}

\usepackage[utf8]{inputenc} % allow utf-8 input
\usepackage[T1]{fontenc}    % use 8-bit T1 fonts
\usepackage{hyperref}       % hyperlinks
\usepackage{url}            % simple URL typesetting
\usepackage{booktabs}       % professional-quality tables
\usepackage{amsfonts}       % blackboard math symbols
\usepackage{nicefrac}       % compact symbols for 1/2, etc.
\usepackage{microtype}      % microtypography
\usepackage{lipsum}
\usepackage{fancyhdr}       % header
\usepackage{graphicx}       % graphics
\usepackage{xcolor}         % colors
\graphicspath{{media/}}     % organize your images and other figures under media/ folder

\usepackage{caption}
\usepackage{subcaption}

\usepackage{amsmath}
\usepackage{amsthm}
\usepackage{mathtools}
\usepackage{natbib}
\usepackage{dsfont}

%%% environments

\newtheorem{proposition}{Proposition}

\newtheorem{definition}{Definition}
\newtheorem{algo}{Algorithm}
\newtheorem{lemma}{Lemma}

%% my shortcuts

\providecommand{\R}{}
\renewcommand{\R}{\mathds{R}}

\newcommand\cL{{\mathcal L}}
\newcommand\cM{{\mathcal M}}
\newcommand\cN{{\mathcal N}}
\newcommand\cO{{\mathcal O}}
\newcommand\cS{{\mathcal S}}

\newcommand{\ind}{\mathds{1}}

\renewcommand{\vec}{\mathrm{vec}}
\renewcommand{\det}{\mathrm{det}}
\newcommand{\diag}{\mathrm{diag}}
\newcommand{\trace}{\mathrm{tr}}
\newcommand{\etr}{\mathrm{etr}}

\newcommand{\new}{\mathrm{PRED}}
\newcommand{\old}{\mathrm{OBS}}

\newcommand{\var}{\mathrm{Var}}
\newcommand{\cov}{\mathrm{Cov}}

\newcommand{\mat}[1]{\boldsymbol{\mathrm{#1}}}

\newcommand{\LMC}{{\text{LMC}}}

\newcommand{\indep}{\ensuremath{\stackrel{\text{ind.}}{\sim}}}

%Header
\pagestyle{fancy}
\thispagestyle{empty}
\rhead{ \textit{ }} 

% Update your Headers here
\fancyhead[LO]{Computational Considerations for the LMC}
% \fancyhead[RE]{Firstauthor and Secondauthor} % Firstauthor et al. if more than 2 - must use \documentclass[twoside]{article}

%% Title
\title{Computational Considerations for the Linear Model of Coregionalization
%%%% Cite as
%%%% Update your official citation here when published 
%\thanks{\textit{\underline{Citation}}: 
%\textbf{Authors. Title. Pages.... DOI:000000/11111.}} 
}

\author{
  Renaud Alie, David A. Stephens \\
  Department of Mathematics and Statistics \\
  McGill University \\
  Montreal\\
  %% examples of more authors
   \And
  Alexandra M. Schmidt \\
  Department of Epidemiology, Biostatistics and Occupational Health\\
  McGill University \\
  Montreal\\
  %% \AND
  %% Coauthor \\
  %% Affiliation \\
  %% Address \\
  %% \texttt{email} \\
  %% \And
  %% Coauthor \\
  %% Affiliation \\
  %% Address \\
  %% \texttt{email} \\
  %% \And
  %% Coauthor \\
  %% Affiliation \\
  %% Address \\
  %% \texttt{email} \\
}

\begin{document}

\maketitle

\begin{abstract}
In the last two decades, the linear model of coregionalization (LMC) has been widely used to model multivariate spatial processes. However, it can be a challenging task to conduct likelihood-based inference for such models because of the cubic cost associated with Gaussian likelihood evaluations. Starting from an analogy with matrix normal models, we propose a reformulation of the LMC likelihood that highlights the linear, rather than cubic, computational complexity as a function of the dimension of the response vector. We describe how those simplifications can be exploited in Gaussian hierarchical models. In addition, we propose a new sparsity-inducing approach to the LMC that introduces structural zeros in the coregionalization matrix in an attempt to reduce the number of parameters in a principled and data-driven way. Our reformulation of the LMC likelihood ensures that our sparse approach comes at virtually no additional cost when included in a Markov chain Monte Carlo (MCMC) algorithm. It is shown, on synthetic data, to significantly improve predictive performance. We also apply our methodology to a dataset comprised of air pollutant measurements from the state of California. We investigate the strength of the correlation among the measurements by providing new insights from our sparse method.

\end{abstract}

\section{Introduction}

The literature concerning computational scalability for spatial Gaussian models has for the most part been concerned with applications where the number of spatial locations $n$ is large. Such situations usually forbid likelihood-based inference because of the $\cO(n^3)$ computational cost associated with evaluating the joint normal density. Some of the most prominent approximations used in the applied literature include predictive processes \citep{banerjee2008gaussian,finley2009improving}, the stochastic partial differential equation approach pioneered in \citet{lindgren2011explicit} or the nearest-neighbor methodology \citep{vecchia1988estimation,datta2016hierarchical}. \citet{heaton2019case} provide an extensive comparative study of those methods and more. In multivariate spatial models where an output of dimension $p$ is modeled, the cost associated with Gaussian process likelihood evaluations is, in general, $\cO(p^3n^3)$. While the approximations mentioned above are expressed in multivariate generality, likelihood computations still scale with the number $p$ of dimensions cubed.

Multivariate spatial models necessitate the specification of a cross-covariance function. \citet{genton2015cross} describe the necessary conditions under which such a function will lead to a valid model. They also review some of the most widespread approaches. A lot of those fall under the category of constructive methods in which cross-covariance functions are built from univariate models. Those include the covariance convolution framework \citep{majumdar2007multivariate} or the approach based on latent dimensions \citep{apanasovich2010cross}, among others. Another important example is the class of Mat\'ern cross-covariance functions \citep{gneiting2010matern,apanasovich2012valid} which offers, under some constraints, multivariate spatial models where the marginal distribution of each process has univariate Mat\'ern covariance structure. None of the models mentioned above benefit from the type of computational simplifications we describe in this paper.

Perhaps the simplest (at least conceptually) of multivariate models, the linear model of coregionalization (LMC), is constructed as a linear combination of independent processes.
From $r$ orthogonal base processes, we obtain $p$ new components that have cross-dependence. In its inception \citep{matheron1982pour,bourgault1991multivariable,goulard1992linear}, the LMC was conceptualized as a mechanism allowing for dimensionality reduction. This is accomplished by specifying a transformation matrix of dimension $p \times r$ with $r < p$. It essentially shrinks an observed $p$-dimensional process to $r$ latent and independent ones. In this work, we only discuss the $r = p$ case. We exclusively consider the LMC as an approach to construct a $p$-variate process with cross-dependence from $p$ univariate spatial correlation structures as it is described in \cite{schmidt2003bayesian,gelfand2004nonstationary,gelfand2005spatial}. The LMC is still used to this day in the applied sciences (see for example \citet{ji2021geographically, carter2024land, heydari2023scale}). It is moreover an excellent candidate to incorporate into more involved methodology, such as spatiotemporal models \citep{mastrantonio2019hierarchical,cappello2022computational} or neural network architectures \citep{liu2022scalable,wang2022lmc}.

The LMC implies a likelihood that can be evaluated in a number of operations that is linear in $p$. This offers immense speed-ups in cases where the number of dimensions is even moderately large. This is not the result of an approximation. Instead, we achieve those computational shortcuts by exploiting the cross-covariance structure implied by the LMC. Hence, there are no drawbacks to utilizing the computational results we present in this paper. Those computational shortcuts can be viewed as a natural extension of separable models. 

Work on so-called highly multivariate processes concerns spatial statistical applications where the number $p$ of dimensions and the number $n$ of locations are both fairly large, with the former being at least a significant fraction of the latter. \citet{taylor2019spatial} and \citet{zhang2022spatial} describe similar approaches where a spatial factor model is compounded with a nearest-neighbor approximation to handle high-dimensional processes observed at numerous locations. In the context of areal data, \citet{bradley2015multivariate} proposed to combine a dimension reduction mechanism with a specific basis function expansion of the spatial process, creating a model that handles both large $p$ and $n$. In this work, we focus on the advantages of working with the LMC as a multivariate model. The computational simplifications associated with this model could be implemented in any of the frameworks described above as an alternative to dimension reduction.

Other approaches to highly multivariate data reduce the computations in terms of $p$ by exploiting a sparser dependence structure among processes. \citet{dey2022graphical} propose a stitching construction that conforms to a conditional independence graph among processes while preserving the specified marginal distribution of each one. Sparse and decomposable graphs lead to easier evaluated likelihood functions with fewer parameters. Whenever the conditional independence graph is not known in advance, the authors propose a reversible jump MCMC algorithm on the edges of the conditional independence graph. In \cite{krock2023modeling}, the authors assume a basis expansion of the spatial Gaussian process. They obtain a sparse dependence structure by formulating the estimation of the covariance among basis coefficients as a graphical lasso optimization problem. In contrast, we show that the LMC, as a multivariate model, implies likelihood computations that scale linearly in terms of the number $p$ of components even when allowing complete dependence among processes. Additionally, we propose a sparse version that aims to reduce the size of the parameter space associated with the regular LMC. As a by-product, it also allows us to infer sparser dependence structures without increasing the computing time when compared to the regular LMC.

We first review the equivalence between separable models and matrix normal distributions in Section \ref{LMCmain}. We then extend this matrix normal analogy to the non-separable LMC. In the context of MCMC-based Bayesian inference, the structure of the LMC can be advantageous beyond the obvious computational speed-ups. We introduce an alternative prior to the coregionalization matrix in Section \ref{sparseApp}. This introduces sparsity both in the parameter space and the resulting marginal covariance of the $p$ components. Because of our matrix normal formulation of the LMC likelihood, this new approach comes at virtually no additional cost from a compuational standpoint. It is also shown to significantly improve the accuracy of out-of-sample predictions.

\section{Linear Model of Coregionalization}\label{LMCmain}

\subsection{Separable Specification}\label{separable}

In this section, we explore the link between multivariate Gaussian processes with a separable cross-covariance function and the matrix normal distribution. This will serve as an analogy for the computational results we present in Section \ref{genLMC} concerning the non-separable LMC.

\begin{definition}[Matrix Normal Distribution \citep{dawid1981some}]
    A random $p \times n$ matrix $\mat X$ has the $\cM \cN (\mat M, \mat \Sigma,\mat R)$ distribution if its probability density function is given by
    \begin{align}
    p(\mat X | \mat M, \mat \Sigma,\mat R) = \frac{\exp [ -\frac{1}{2} \trace  \{ \mat R^{-1} (\mat X - \mat M)^\top \mat \Sigma^{-1}  (\mat X - \mat M) \} ]}{(2\pi)^{pn/2}  \det(\mat R)^{p/2} \det(\mat\Sigma)^{n/2} }, \label{matNorm}
    \end{align}
    where $\mat M$ is a real valued $p \times n$ matrix and $\mat \Sigma,\mat R$ are respectively $p \times p$ and $n \times n$ positive definite matrices. 
    
    Equivalently, $\mat X \sim \cM \cN (\mat M, \mat \Sigma,\mat R)$ if the vector $\vec(\mat X)$ obtained by stacking the columns $\mat x_1, \dots, \mat x_n$ of $\mat X$ has the multivariate normal distribution with mean $\vec(\mat M)$ and covariance $ \mat R \otimes \mat \Sigma$ where $\otimes$ represents the Kronecker product.
\end{definition}

In the previous definition, $\mat M$ is the location parameter while $\mat \Sigma$ (resp. $\mat R$) corresponds to the marginal column (resp. row) covariance. A realization of the $\cM \cN (\mat M, \mat \Sigma,\mat R)$ distribution can be obtained from a $p \times n$ matrix $\mat Z$ of independent standard normal by setting $\mat X=\mat A \mat Z \mat B^\top + \mat M$ where $\mat A, \mat B$ are such that $\mat \Sigma = \mat A \mat A^\top$ and $\mat R = \mat B \mat B^\top$ ($\mat A$ and $\mat B$ can be the lower triangular Cholesky factors of $\mat \Sigma$ and $\mat R$ for example). Importantly, simulation from such a matrix normal distribution can be accomplished without factorizing the full $np \times np$ matrix $\mat R \otimes \mat \Sigma$. The structure of the model allows us to work with individual matrices $\mat R$ and $\mat \Sigma$. This will be a common thread in the remainder of this article.

In the spatial modeling context, the equivalent would be to define a multivariate process as $\mat v(\mat s) = \mat A \mat w(\mat s)$ where $\mat A$ is a $p \times p$ full rank matrix and $\mat w(\mat s) = (w_1(\mat s),\dots,w_p(\mat s))^\top$ consists of $p$ i.i.d. Gaussian random fields. This is known as the LMC with separable covariance structure \citep{gelfand2004nonstationary}. We assume that each $w_j(\cdot), j=1,\dots,p$ has 0 mean and a marginal variance of 1. Scaling is handled by $\mat A$ which is referred to as the coregionalization matrix in the following. Let $\mat V = (\mat v(\mat s_1),\dots,\mat v(\mat s_n))$ denote a $p\times n$ matrix organizing in columns some $n$ observations of this multivariate process at distinct locations $\mat s_1, \dots, \mat s_n$ contained in some domain $\cS$. Commonly, we consider $\cS \subset \R^2$ such as in the case of geographically indexed observations. Under the separable specification, the distribution of $\mat V$ is $\cM \cN (\mat 0, \mat A \mat A^\top,\mat R)$ where $\mat R$ is the common correlation matrix associated with the $w_j(\cdot)$ processes at locations $\mat s_1.\dots,\mat s_n$. It depends on the coregionalization matrix only through the product $\mat A \mat A^\top$. Such a process admits a cross-covariance function of the form $C(\mat s,\mat t) = \rho(\mat s,\mat t)\mat A \mat A ^T $, where $\rho(\cdot,\cdot)$ is the common correlation function of processes $w_j(\cdot), j=1,\dots,p$. The term separable stems from the fact that the covariance structure factorizes in a spatial component times the marginal (at a single location) covariance among the $p$ dimensions. In this case, inference can be conducted directly on the positive definite matrix $\mat \Sigma = \mat A \mat A^\top$ rather than the non-identifiable $\mat A$. The remaining parameters are those contained in the common spatial correlation function defining the processes $ w_j(\cdot),j=1,\dots,p$. We assume the correlation between two locations $\mat s, \mat t$ to be of the form $\rho(\varphi |\mat s - \mat t|)$ where $\rho(\cdot)$ is some positive definite function. For ease of exposition, the spatial dependence is encapsulated by a single range parameter $\varphi>0$. More elaborate families could also be envisioned. 

We focus on a Bayesian formulation of inference and consider an MCMC approach that updates parameters $\mat \Sigma$ and $\varphi$ in turn at each iteration. This usually entails evaluating the likelihood at each step of the Markov chain which is known to be costly for multivariate Gaussian densities even with a moderate sample size. If we were to naively consider the multivariate normal distribution of $\vec(\mat V)$, evaluating the likelihood would require computing the inverse and determinant of the covariance matrix $\mat R \otimes \mat \Sigma$. This computation has overall $\cO (p^3n^3)$ asymptotic complexity. However, basic algebra results tell us that $(\mat R \otimes \mat \Sigma)^{-1} = \mat R^{-1} \otimes \mat \Sigma^{-1}$ and $\det(\mat R \otimes \mat \Sigma) = \det(\mat R)^p \det(\mat \Sigma)^n$. Exploiting the structure of this model allows us to conduct our calculation directly on the individual matrices $\mat R$ and $\mat \Sigma$ which is a great advantage over a general multivariate model for a sample size of $np$. An update to the single spatial range parameter $\varphi$ requires us to compute the inverse and determinant of the $n \times n$ matrix $\mat R$ which is the crux of the computation since we consider the number $n$ of samples to greatly exceed the dimension $p$ of each observation, i.e. $p \ll n$.

Moreover, we can rewrite the quadratic product $\vec(\mat V)^\top(\mat R^{-1} \otimes \mat \Sigma^{-1}) \vec(\mat V)$ as a trace to obtain an expression similar to equation \eqref{matNorm}, thus avoiding the need to ever construct or store any $np \times np$ matrix. What is also worth mentioning is that we obtain full-conditional conjugacy by assigning an inverse-Wishart prior to $\mat \Sigma$. Those computational considerations about the separable model are exposed in \citet{banerjee2002prediction} among others. We solely review them here to provide some context for the more general results we present in Section \ref{genLMC}. However, from a modeling perspective, the separable specification is too restrictive for most applications as the components of $\mat v (\cdot)$ have the same marginal distribution. In particular, it would imply that all the variables in the model have the same practical range. This assumption is pretty strong and rather unreasonable when modeling the spatial covariation of different environmental variables, for example.

\subsection{Non-Separable LMC} \label{genLMC}

Allowing the independent base processes $w_j(\cdot),j=1,\dots,p$ to have distinct covariance functions solves the issue of having identically distributed components $v_j(\cdot)$. In light of Section \ref{separable}, a natural question that arises is how much, if any, of the computational advantages can be retained from the separable case. Some of the computational results we employ were discussed in the Ph.D. thesis of \citet{kyzyurova2017uncertainty}. To summarize, the covariance structure of the LMC implies Gaussian likelihood evaluations that scale linearly in $p$ rather than the $\cO(p^3n^3)$ associated with general multivariate models. It positions the LMC as a computationally advantageous alternative to other multivariate processes. In our opinion, this fact has failed to gain the traction it deserves.

The setup is as before, with a multivariate process $\mat v(\mat s)$ defined from $p$ independent processes $ w_j(\mat s),j=1,\dots,p$ through the linear transformation $\mat v(\mat s) = \mat A \mat w(\mat s)$. This time however, the base processes $w_j(\cdot)$ do not have the same distribution. As before, we consider the processes to have a mean of 0 and a variance of 1, but we make them distinct by assigning each $w_j(\cdot)$ a different correlation function $\rho_j(\cdot)$ \citep{schmidt2003bayesian}. We will restrict ourselves to one family of isotropic and stationary correlation functions indexed by distinct range parameters so that $\cov(w_j(\mat s), w_j(\mat t)) = \rho_j(\mat s,\mat t) = \rho(\varphi_j|\mat s - \mat t|)$.

Gaussian distribution properties under linear transformations tell us that the cross-covariance function of the multivariate process $\mat v(\cdot)$ is given by $C(\mat s, \mat t) = \sum_{j=1}^p \rho_j(\mat s,\mat t) \mat a_j\mat a_j^\top$ where $\mat a_j$ is the $j^{\mathrm{th}}$ column of $\mat A$. For a finite sample of size $n$ organized in a column vector $(\mat v(\mat s_1)^\top,\dots,\mat v(\mat s_n)^\top)^\top$ of length $np$, this equates to a multivariate normal distribution of mean 0 and covariance matrix $\sum_{j=1}^p \mat R_j \otimes \mat a_j \mat a_j^\top$ \citep{gelfand2004nonstationary} with $\mat R_j$ being the spatial correlation matrix associated with $(w_j(\mat s_1), \dots,  w_j(\mat s_n))^\top$.

Neither the covariance function nor matrix can be factorized under the non-separable LMC so it is unclear how the structure of the model could be exploited. On the other hand, we know intuitively that generating LMC-distributed random variables does not require the Cholesky factor of the $np \times np$ covariance matrix. We can simulate the base point processes $w_j(\cdot)$ at locations $\mat s_1,\dots,\mat s_n$ from the individual Cholesky factors ($\mat B_j$, say) of the spatial correlation matrices $\mat R_j,j=1\dots,p$ and then apply the linear transformation $\mat A$ at every location. In algebraic terms, we can generate a $p \times n$ LMC distributed matrix $\mat V$ with rows (resp. columns) corresponding to processes (resp. locations) by setting
$$
\mat V = \sum_{j=1}^p \mat a_j \mat z_j \mat B_j^\top,
$$
where $\mat z_j$ are row vectors of independent and standard normal random variables. Just as we can circumvent the factorization of the $np \times np$ covariance matrix when simulating the LMC, we can also avoid computing the inverse and determinant directly when evaluating the likelihood.

In Appendix \ref{proofs}, we detail the following two results which have appeared in \citet{kyzyurova2017uncertainty} under some slightly different notation. The first one tells us how to compute the precision matrix: $(\sum_{j=1}^p \mat R_j \otimes \mat a_j \mat a_j^\top)^{-1} = \sum_{j=1}^p \mat R_j^{-1} \otimes \mat a_j^{-\top} \mat a_j^{-1}$, where $\mat a_j^{-1}$ is the $j^{\mathrm{th}}$ row of the inverse matrix $\mat A^{-1}$. The second one concerns the determinant and says that $\det(\sum_{j=1}^p \mat R_j \otimes \mat a_j \mat a_j^\top) = \det(\mat A)^{2n} \prod_{j=1}^p \det(\mat R_j)$. We go further and express the LMC likelihood in terms of the $p \times n$ matrix $\mat V$. This facilitates the calculation beyond what is shown in \citet{kyzyurova2017uncertainty} by simplifying what would be a $np \times np$ quadratic product under the vectorized form.

\begin{proposition}[The LMC Density]\label{LMCdensRes}
    Let $\mat w_j\indep\cN(\mat 0,\mat R_j), j=1,\dots,p$ be independent $n$-vectors forming the rows of the $p \times n$ matrix $\mat W = (\mat w_1,\dots,\mat w_p)^\top$. The random matrix $\mat V = \mat A \mat W$ (for some $p \times p$ full-rank matrix $\mat A$) has probability density function 
    \begin{align}
    p(\mat V | \mat A,\mat R_{j=1}^p) = \frac{\exp [ -\frac{1}{2} \sum_{j=1}^p \mat a_j^{-1}\mat V \mat R_j^{-1} \mat V^\top \mat a_j^{-\top}]}{(2\pi)^{np/2} |\det(\mat A)|^{n} \prod_{j=1}^p \det(\mat R_j)^{1/2}  }. \label{LMCdens}
    \end{align}

    Equivalently, we write $\mat V \sim \LMC(\mat A,\mat R_{j=1}^p)$ if $\vec(\mat V)$ has the multivariate normal distribution with mean $\mat 0$ and covariance $\sum_{j=1}^p \mat R_j \otimes \mat a_j \mat a_j^\top$ with $\mat A = (\mat a_1,\dots,\mat a_p)$. 
\end{proposition}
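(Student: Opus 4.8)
The plan is to obtain \eqref{LMCdens} directly through a change of variables from the density of $\mat W$, exploiting that $\mat V = \mat A\mat W$ is a linear bijection since $\mat A$ is full rank. Because the rows $\mat w_j\indep\cN(\mat 0,\mat R_j)$ are independent, the density of $\mat W$ factorizes as $p(\mat W)=\prod_{j=1}^p (2\pi)^{-n/2}\det(\mat R_j)^{-1/2}\exp[-\tfrac12\mat w_j\mat R_j^{-1}\mat w_j^\top]$. I would then phrase the transformation in vectorized form: using the standard identity $\vec(\mat A\mat W)=(\mat I_n\otimes\mat A)\vec(\mat W)$, the map $\vec(\mat W)\mapsto\vec(\mat V)$ is linear with matrix $\mat I_n\otimes\mat A$, whose determinant is $\det(\mat I_n\otimes\mat A)=\det(\mat A)^n$. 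Hence the change-of-variables formula introduces the Jacobian factor $|\det(\mat A)|^{-n}$.

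Next I would substitute $\mat W=\mat A^{-1}\mat V$ into $p(\mat W)$. The $j$-th row of $\mat A^{-1}\mat V$ is precisely $\mat a_j^{-1}\mat V$, where $\mat a_j^{-1}$ is the $j$-th row of $\mat A^{-1}$, so each exponent becomes $\mat w_j\mat R_j^{-1}\mat w_j^\top=\mat a_j^{-1}\mat V\mat R_j^{-1}\mat V^\top\mat a_j^{-\top}$, using $(\mat a_j^{-1}\mat V)^\top=\mat V^\top\mat a_j^{-\top}$. Multiplying the substituted density by the Jacobian factor $|\det(\mat A)|^{-n}$ and collecting normalizing constants yields \eqref{LMCdens} exactly.

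For the stated equivalence, I would observe that $\vec(\mat V)$ is Gaussian, being a linear image of the Gaussian vector $\vec(\mat W)$, with mean $\mat 0$, and then compute its covariance. Writing the row-wise independence of $\mat W$ in vectorized form gives $\cov(\vec(\mat W))=\sum_{j=1}^p\mat R_j\otimes\mat e_j\mat e_j^\top$, with $\mat e_j$ the $j$-th standard basis vector; the mixed-product property then yields
\begin{align*}
\cov(\vec(\mat V)) &= (\mat I_n\otimes\mat A)\Big(\sum_{j=1}^p\mat R_j\otimes\mat e_j\mat e_j^\top\Big)(\mat I_n\otimes\mat A^\top)\\
&= \sum_{j=1}^p\mat R_j\otimes(\mat A\mat e_j)(\mat A\mat e_j)^\top = \sum_{j=1}^p\mat R_j\otimes\mat a_j\mat a_j^\top,
\end{align*}
since $\mat A\mat e_j=\mat a_j$. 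Alternatively, one can confirm the equivalence by inserting the precision and determinant identities stated just before the proposition into the $\cN(\mat 0,\sum_j\mat R_j\otimes\mat a_j\mat a_j^\top)$ density and applying $\vec(\mat V)^\top(\mat R_j^{-1}\otimes\mat a_j^{-\top}\mat a_j^{-1})\vec(\mat V)=\trace(\mat V^\top\mat a_j^{-\top}\mat a_j^{-1}\mat V\mat R_j^{-1})=\mat a_j^{-1}\mat V\mat R_j^{-1}\mat V^\top\mat a_j^{-\top}$.

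The calculation is essentially routine; the only place that demands care is the Jacobian of the vectorized linear map, namely getting the exponent $n$ (rather than $p$) in $\det(\mat A)^n$, which hinges on the ordering convention encoded in $\vec(\mat A\mat W)=(\mat I_n\otimes\mat A)\vec(\mat W)$, together with the transpose bookkeeping that recasts the row $\mat a_j^{-1}\mat V$ as the symmetric-looking quadratic form. Everything else follows from the factorization of the Gaussian density over the independent rows of $\mat W$.
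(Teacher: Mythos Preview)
Your proof is correct and takes a more direct route than the paper's. The paper (Appendix~\ref{proofs}) works from the covariance of $\vec(\mat V)$: it factorizes $\var(\vec(\mat V))$ as a product of block-diagonal and permutation matrices, reads off the inverse $\sum_j \mat R_j^{-1}\otimes \mat a_j^{-\top}\mat a_j^{-1}$ and the determinant $\det(\mat A)^{2n}\prod_j\det(\mat R_j)$ from that factorization (recorded as two separate lemmas), and only then plugs these into the $np$-dimensional Gaussian density and simplifies the quadratic form via the vec/trace identities. You bypass the $np\times np$ covariance entirely by changing variables from the already-factorized density of $\mat W$: the Jacobian $|\det(\mat A)|^{-n}$ together with the row substitution $\mat w_j=\mat a_j^{-1}\mat V$ delivers \eqref{LMCdens} in one step. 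Your route is shorter and more elementary; the paper's has the compensating advantage that the precision and determinant identities fall out as standalone results, which it reuses elsewhere (e.g.\ in the proof of Proposition~\ref{LMCcondDist} and in Appendix~\ref{datAugStep}). You in fact anticipate this connection: the ``alternatively'' in your final paragraph is precisely the paper's argument.
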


Expression \eqref{LMCdens} is analogous (in the case of the general LMC) to the matrix normal density in equation \eqref{matNorm}. We can see that it is again possible to evaluate the likelihood of $\mat V$ without computing the determinant or inverse of any $np \times np$ matrix neither is it necessary to store any such matrix. Computation can be carried out directly on the coregionalization matrix $\mat A$ and the $p$ spatial correlation matrices $\mat R_j$ obtained by applying the functions $\rho(\varphi_j|\cdot|)$ to all pairs of locations in $\mat s_1, \dots, \mat s_n$. In Figure \ref{likeEval}, we empirically compare how likelihood computations scale: we can see the polynomial trend (as a function of $p$) of the naive vectorized form compared to the likelihood formulation in \eqref{LMCdens} which is linear in $p$.

\begin{figure}[h!]
     \centering
     
        \includegraphics[width=\textwidth]{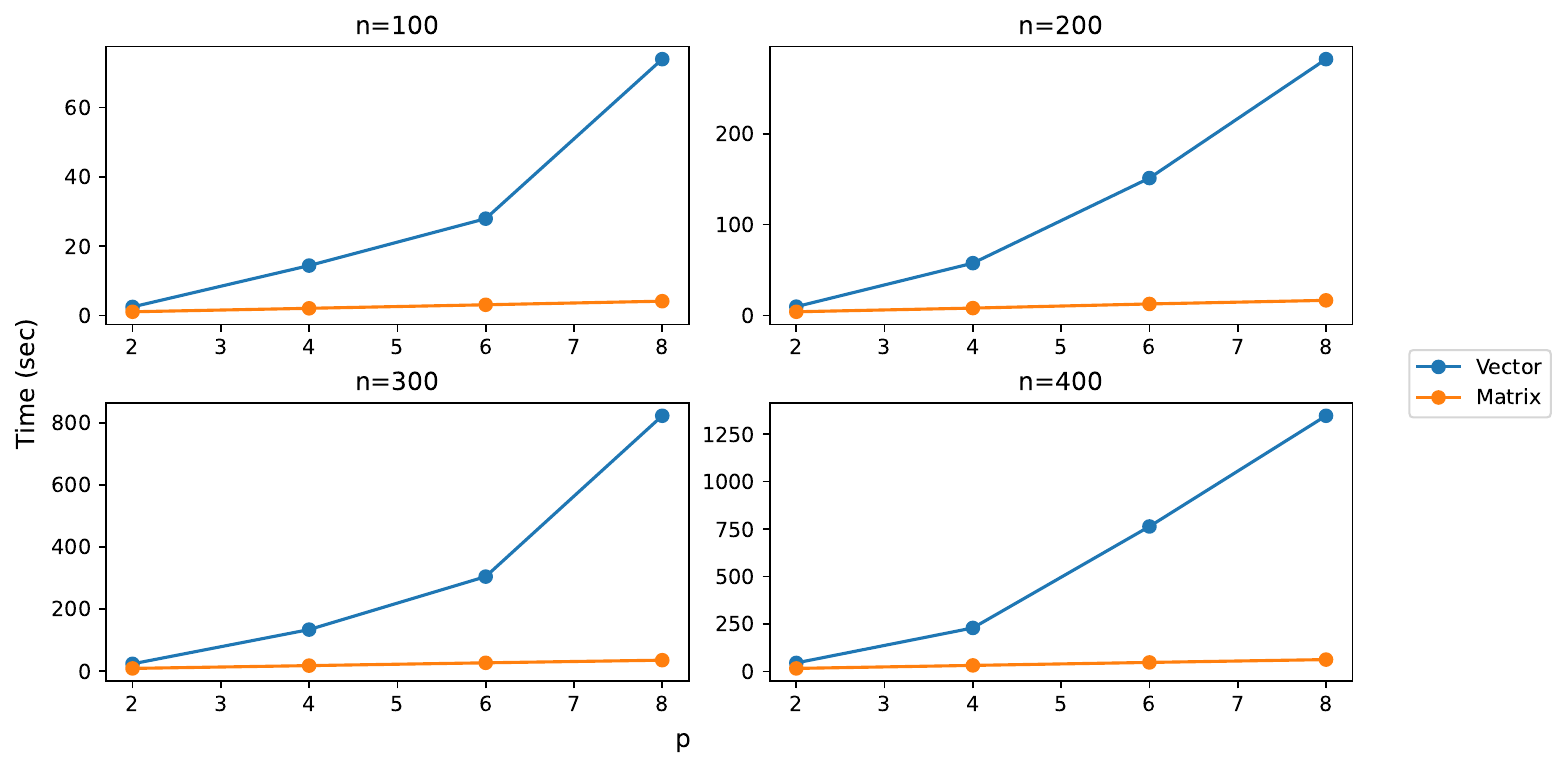}
        \caption{Time elapsed after 500 likelihood evaluations for corresponding $n$ and $p$. The matrix form of the likelihood is the one described in Proposition \ref{LMCdensRes} while the vector form is the naive implementation that computes both the determinant and inverse of $\sum_{j=1}^p \mat R_j \otimes \mat a_j \mat a_j^\top$.}
        \label{likeEval}
\end{figure}

Marginally, the covariance matrix of the process $\mat v(\cdot)$ at any location $\mat s \in \cS$ is $\mat A \mat A^\top$ since the $ w_j(\cdot)$ processes have a variance of 1. However, unlike in the separable case, the finite sample distribution is not the same in general for two different coregionalization matrices $\mat A_1,\mat A_2$ even if $\mat A_1\mat A_1^\top = \mat A_2\mat A_2^\top$ (see Appendix \ref{identif}). There are still however identifiability issues with the matrix $\mat A$ in the covariance matrix $\sum_{j=1}^p \mat R_j \otimes \mat a_j \mat a_j^\top$. The model is equivalent even if we apply a change of sign to any column $\mat a_j$. It is also invariant under any reordering/relabelling of indices $\{1,\dots,p\}$. The general LMC implies a covariance among pairs of scalar observations that is of the form
\begin{align}
   \cov( v_i(\mat s), v_j(\mat t)) = C_{ij}(\mat s, \mat t) = \sum_{k=1}^p a_{ik}a_{jk}\rho(\varphi_k|\mat s - \mat t|), \label{crossCovStruct}
\end{align}
where the quantities $a_{ij}$ are the individual entries in the coregionalization matrix $\mat A$. 

% In Section 3 of the supplementary material, we describe in more detail how the cross-covariance function described by \eqref{crossCovStruct} compares with other well-known alternatives (computational considerations aside). 

% \begin{figure}[h!]
%      \centering
%      \begin{subfigure}[b]{0.49\textwidth}
%          \centering
%          \includegraphics[width=\textwidth]{}
%          \caption{}
%          \label{trajs}
%      \end{subfigure}
%      \hfill
%      \begin{subfigure}[b]{0.49\textwidth}
%         \centering
%          \includegraphics[width=\textwidth]{}
%          \caption{}
%          \label{crosscov}
%     \end{subfigure}
%         \caption{(a) Trajectories of a bivariate Gaussian process over $[0,1]$ with covariance structure illustrated in (b). The two processes are positively correlated at very short distances but otherwise negatively correlated with the lowest point being around $s = 0.1$.}
%         \label{covStructLMC}
% \end{figure}

Next, we discuss predicting the multivariate process implied by the LMC at a set $\mat s_\new$ of locations conditional on its values observed at locations $\mat s_\old$. We start by describing the conditional distributions of the LMC. Same as before, we assume $\mat V$ is a $p \times n$ matrix, only this time the matrix is split along the horizontal into two sub-matrices $\mat V_\old$ and $\mat V_\new$ such that $\mat V = (\mat V_\old, \mat V_\new)$. In that case, the spatial covariance matrices $\mat R_j$ are each partitioned between the observed and prediction locations as 
\begin{align}
\mat R_j = \left(\begin{matrix*}[l]
\mat R^{(j)}_\old & \mat R^{(j)}_{\old,\new}\\
\mat R^{(j)}_{\new,\old} & \mat R^{(j)}_\new  
\end{matrix*}\right). \label{fullCorr}
\end{align}
We are interested in the conditional distribution of $\mat V_\new$ given $\mat V_\old$. Importantly, the distribution of $\mat V_\new$ conditional on $\mat V_\old$ preserves the LMC covariance structure. This is detailed in the following result.

\begin{proposition}[Conditional Distributions of the LMC]\label{LMCcondDist}
    The distribution implied by the LMC at locations $\mat s_\new$ conditional on its values $\mat V_\old$ at locations $\mat s_\old$ is also an LMC with coregionalization matrix $\mat A$ and spatial correlation matrices given by $\mat R^{(j)}_\new -\mat R^{(j)}_{\new,\old}{\mat R^{(j)^{-1}}_\old}\mat R^{(j)}_{\old,\new}$. It is centered around the matrix
    \begin{align}
    \sum_{j=1}^p \mat a_j \mat a_j^{-1} \mat V_\old {\mat R^{(j)^{-1}}_\old}\mat R^{(j)}_{\old,\new}.\label{simpleKrig}
    \end{align}
    This is equivalent to applying the inverse transformation $\mat A^{-1} \mat V_\old = \mat W_\old$, computing the conditional $\mat W_\new|\mat W_\old$ across the $p$ independent rows and transforming back as $\mat V_\new = \mat A \mat W_\new$.
\end{proposition}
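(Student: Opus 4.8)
The plan is to exploit the equivalence asserted in the final sentence of the statement, which renders the proof almost immediate once the right change of variables is made. The crucial structural observation is that the coregionalization matrix $\mat A$ acts on the process index (the rows of $\mat V$), whereas the observed/prediction split \eqref{fullCorr} partitions the location index (the columns). Since $\mat A^{-1}$ is applied identically to every column, the invertible map $\mat V \mapsto \mat W = \mat A^{-1}\mat V$ does not mix observed and prediction locations: it sends $\mat V_\old$ to $\mat W_\old = \mat A^{-1}\mat V_\old$ and $\mat V_\new$ to $\mat W_\new = \mat A^{-1}\mat V_\new$ separately. Being a bijection, conditioning on $\mat V_\old$ coincides with conditioning on $\mat W_\old$, so it suffices to determine the law of $\mat W_\new \mid \mat W_\old$ and then push it forward through $\mat V_\new = \mat A \mat W_\new$.

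First I would condition in the base-process coordinates. By construction the rows $\mat w_j$ of $\mat W$ are independent with $\mat w_j \sim \cN(\mat 0,\mat R_j)$, so the conditional law of $\mat W_\new$ given $\mat W_\old$ factorizes over $j=1,\dots,p$. Splitting each row as $\mat w_j = (\mat w_{j,\old},\mat w_{j,\new})$ and applying the standard Gaussian conditioning formula to the partition \eqref{fullCorr} yields, row by row, a normal law with mean $\mat w_{j,\old}{\mat R^{(j)^{-1}}_\old}\mat R^{(j)}_{\old,\new}$ and covariance equal to the Schur complement $\mat R^{(j)}_\new - \mat R^{(j)}_{\new,\old}{\mat R^{(j)^{-1}}_\old}\mat R^{(j)}_{\old,\new}$. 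Writing $\mat W_\new = \boldsymbol{\mu}_W + \tilde{\mat W}_\new$, where the $j$th row of $\boldsymbol{\mu}_W$ is this conditional mean and $\tilde{\mat W}_\new$ collects the mean-zero residual rows, reproduces exactly the hypotheses of Proposition \ref{LMCdensRes}: the rows of $\tilde{\mat W}_\new$ are independent, mean-zero Gaussian with correlation matrices given by the Schur complements.

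It remains to transform back. Applying $\mat A$ gives $\mat V_\new = \mat A\boldsymbol{\mu}_W + \mat A\tilde{\mat W}_\new$. The term $\mat A\tilde{\mat W}_\new$ is, by the construction in Proposition \ref{LMCdensRes}, precisely a mean-zero $\LMC$ with coregionalization matrix $\mat A$ and spatial correlation matrices $\mat R^{(j)}_\new - \mat R^{(j)}_{\new,\old}{\mat R^{(j)^{-1}}_\old}\mat R^{(j)}_{\old,\new}$, which is the asserted conditional covariance structure. For the center I would use the outer-product expansion $\mat A\boldsymbol{\mu}_W = \sum_{j=1}^p \mat a_j (\boldsymbol{\mu}_W)_{j,:}$, where $(\boldsymbol{\mu}_W)_{j,:}$ is the $j$th row of $\boldsymbol{\mu}_W$; substituting $\mat w_{j,\old} = \mat a_j^{-1}\mat V_\old$ (the $j$th row of $\mat A^{-1}\mat V_\old$) into the per-row mean yields $\sum_{j=1}^p \mat a_j\mat a_j^{-1} \mat V_\old {\mat R^{(j)^{-1}}_\old}\mat R^{(j)}_{\old,\new}$, matching \eqref{simpleKrig}.

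The argument is conceptually short; the part demanding the most care is the bookkeeping that links the two representations, namely confirming that $\mat A^{-1}$ genuinely respects the location partition (so that conditioning on $\mat V_\old$ and on $\mat W_\old$ are one and the same) and tracking the row-versus-column orientation of each $\mat w_j$ so that the per-row conditional means assemble, through the expansion of $\mat A\boldsymbol{\mu}_W$, into the compact form \eqref{simpleKrig}. Beyond this no analytic difficulty arises: everything reduces to the invertibility of $\mat A$ and elementary Gaussian conditioning applied across independent rows.
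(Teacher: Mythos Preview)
Your argument is correct. It differs from the paper's proof, which works directly in the vectorized representation: there one writes the joint covariance of $(\mat v_\old^\top,\mat v_\new^\top)^\top$ as $\sum_{j=1}^p \mat R_j \otimes \mat a_j \mat a_j^\top$, applies the standard block Gaussian conditioning formulas, and simplifies the resulting products of Kronecker sums using the mixed-product property together with the orthogonality $\mat a_i^\top \mat a_j^{-\top} = \delta_{ij}$ (which collapses the double sums to single sums). Your route instead takes the equivalence stated in the last sentence of the proposition as the \emph{definition} of the proof: you pass to the base-process coordinates $\mat W = \mat A^{-1}\mat V$, condition row by row using only univariate Gaussian facts, and push forward through $\mat A$. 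This is more transparent and entirely avoids Kronecker manipulations; the paper's calculation, by contrast, verifies directly at the level of the $np\times np$ covariance matrix that the LMC structure is preserved, which is perhaps more reassuring if one wants to see the algebraic closure property explicitly and connects more naturally to Lemma~\ref{invLemma}.
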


The proof can be found in Appendix \ref{proofCond}. Proposition \ref{LMCcondDist} has implications for Vecchia type of approximations \citep{vecchia1988estimation,datta2016hierarchical,katzfuss2021general} which are described by conditional independence assumptions. This computational approximation offers likelihood evaluations that are linear in the number $n$ of locations. It also has been shown empirically to be accurate when compared with alternative methods \citep{heaton2019case}. That the LMC covariance structure is preserved under conditioning signifies that the computational benefits we describe here can in principle be used in conjunction with such approximations when both $n$ and $p$ are fairly large. 

\subsection{The triangular specification}\label{triSpec}

A common thread among the statistical literature concerning the LMC  (see \citet{schmidt2003bayesian,foley2008statistical} for example) is to use a lower triangular parametrization for the coregionalization matrix $\mat A$ where diagonal elements are restricted to positive values. It does solve the identifiability issue described in the previous section. Columns cannot change signs anymore because of the positivity of the diagonal. They cannot be reordered either since that would break the triangular structure.

Doing so, however, imposes an asymmetry on the components of the model. The modeling assumptions are different under a reordering of dimension indices $\{1,\dots,p\}$ even though we usually think of such labels as superfluous. It is perhaps better appreciated by looking at the marginal covariance function implied by the lower triangular structure. For each process $ v_j(\cdot),j=1,\dots,p$, we have from \eqref{crossCovStruct} that $C_j(\mat s,\mat t) = \sum_{k=1}^j  a_{jk}^2 \rho(\varphi_k |\mat s - \mat t|)$. Marginal models go up in complexity as indices increase. This limitation is also outlined in \citet{carter2024land}.
% \begin{align*}
%     C_1(\mat s, \mat t) &= a_{11}^2 \rho(\varphi_1 |\mat s - \mat t|),\\
%     C_2(\mat s, \mat t) &=  a_{21}^2 \rho(\varphi_1 |\mat s - \mat t|) +  a_{22}^2 \rho(\varphi_2 |\mat s - \mat t|),\\
%     &\vdots\\
%     C_p(\mat s,\mat t) &= \sum_{k=1}^p  a_{pk}^2 \rho(\varphi_k |\mat s - \mat t|).
% \end{align*}

As discussed in \citet{schmidt2003bayesian} and \citet{gelfand2004nonstationary}, the lower triangular specification for $\mat A$ admits a computational simplification when describing the processes conditionally as $v_j(\cdot)|v_1(\cdot),\dots,v_{j-1}(\cdot)$. This arises because we can write $\mat A^{-1} \mat v(\mat s) = \mat w(\mat s)$ and $\mat A^{-1}$ is also lower triangular. Hence, for any $j=1,\dots,p$ we can write $v_j(\mat s) = \sum_{\ell=1}^{j-1} \alpha_{j\ell} v_\ell(\mat s) + \sigma_j w(\mat s)$ with $\alpha_{j\ell} = - a^{-1}_{j\ell} / a^{-1}_{jj}$ and $\sigma_j = 1/a^{-1} _{jj}$ ($a^{-1}_{j\ell}$ are elements of the inverse matrix $\mat A^{-1}$). Each value $v_j(\mat s)$ is a linear combination of the previous $v_1(\mat s),\dots,v_{j-1}(\mat s)$ and the $j^{\text{th}}$ underlying process $w_j(\mat s)$. The covariance matrix of process $v_j(\cdot)$ at locations $\mat s_1,\dots,\mat s_n$ conditional on previous $v_\ell(\cdot),\ell=1,\dots,j-1$ simplifies to that of $w_j(\cdot)$ at said locations.

To summarize, the likelihood can be computed in conditional factorization from the inverses and determinants of individual matrices $\mat A, \mat R_1,\dots, \mat R_p$. This is an attractive formulation from a computational standpoint. Generalizing this, we demonstrated in Section \ref{genLMC} that any LMC parametrization benefits from a similar computational simplification. This is fortunate as the triangular approach is limiting and hard to justify. It is a fundamental (rather than circumstantial) property of the LMC to have likelihood evaluations that require a number of operations that is linear in $p$. Computational simplicity is perhaps its most attractive feature.

\subsection{Hierarchical Models}\label{hierModels}

In this section, we employ the LMC as the latent component in a hierarchical Gaussian model. We first describe what is a standard problem when discussing computational scalability in any Gaussian process model. Suppose you observe a process $\mat Y(\mat s) =  \mat v(\mat s) + \mat \epsilon(\mat s)$ where $\mat v(\cdot)$ is some zero mean process accounting for spatial dependence in $\cS\subset \mathds R^d$ space while $\mat \epsilon(\cdot)$ is some form of white noise. The multivariate $\cN(0,\mat \Sigma)$ distribution implied by process $\mat v(\cdot)$ observed at locations $\mat s_1, \dots, \mat s_n$ might benefit from some useful computational tricks when computing the likelihood. This could be the result of employing an approximation such as the nearest neighbor approach of \citet{datta2016hierarchical} that provides a factorization (detailed in \citet{finley2019efficient}) of the precision matrix $\mat \Sigma^{-1}$. It could also be because the structure of the covariance matrix $\mat\Sigma$ can be exploited, as is the case for the LMC. In either case, we face the same problem: the finite-dimensional distribution of $\mat Y(\cdot)$ observed at $n$ locations will be of the form $\cN(0,\mat  \Sigma +  \mat D)$ for some diagonal matrix $\mat D$. Since $\mat D$ is of full rank, there is no simple way to evaluate the Gaussian likelihood without inverting or computing the determinant of any $np \times np$ matrix (in the multivariate setting). In other words, working with the distribution of the collapsed Gaussian process $\mat Y(\cdot)$ forces you to relinquish any computational schemes that applied to $\mat v(\cdot)$. 

In a Bayesian context, there is always the option of sampling the latent random field $\mat v(\cdot)$ and model parameters in turn in a Gibbs sampling approach. However, this often leads to Markov chains that exhibit high autocorrelation \citep{papaspiliopoulos2007general,murray2010slice,yu2011center}. We explore strategies to alleviate this problem for the specific case of the LMC in Appendix \ref{interSec}. The distribution of $\mat v$ conditional on $\mat y$ is multivariate normal with covariance matrix $(\mat \Sigma^
{-1} + \mat D^{-1})^{-1}$ which is also cumbersome to work with. Nevertheless, sampling from the full conditionals of the $np$-dimensional distribution of $\mat v| \mat y$ can be accomplished without the need to invert or factorize any $np \times np$ matrix. Details are given in Appendix \ref{datAugStep}.

In the context of the LMC, the finite sample equation has the form $\mat Y = \mat V + \mat \epsilon$ where $\mat V \sim \LMC(\mat A,\mat R_{j=1}^p)$ and $\mat \epsilon \sim\cM \cN (\mat 0, \mat D,\mat I)$. The $p\times p$ matrix $\mat D=\text{diag}\{\tau_j,j=1,\dots,p\}$ contains the distinct observational variances associated with each of the $p$ processes. We omit the usual mean structure to focus our discussion on the sampling of covariance parameters. The LMC structure of $\mat V$ cannot be exploited in order to derive the covariance of $\mat Y$ even though $\mat \epsilon$ is itself an LMC distributed matrix normal. In general, the sum of LMC-distributed matrices is not itself distributed according to the LMC. The same is true for the standard matrix normal distribution described by Definition \ref{matNorm}.

The $2np$-dimensional normal density of the complete data $\mat Y,\mat V$ is of the form:
\begin{align}
p(\mat Y, \mat V | \mat A, \mat R_{j=1}^p, \mat D) = \frac{\exp[-\frac{1}{2} \sum_{j=1}^p \mat a_j^{-1}\mat V \mat R_j^{-1} \mat V^\top \mat a_j^{-\top} -\frac{1}{2} \trace\{(\mat V - \mat Y)^\top \mat D^{-1} (\mat V - \mat Y)\}]}{ (2\pi)^{np} |\det(\mat A)|^{n} \prod_{j=1}^p \det(\mat R_j)^{1/2}\prod_{j=1}^p \tau_j^{n/2}}. \label{compCent}
\end{align}
Because we used the alternative formulation described in Section \ref{genLMC} for the latent LMC $\mat V$, expression \eqref{compCent} above can be evaluated without the need to carry out computations on any $np\times np$ matrices.

The complete data likelihood in \eqref{compCent} describes a hierarchical model for $\mat Y$ that is centered at $\mat V$ with independent error terms $\mat \epsilon$. Alternatively, we could also write $\mat Y = \mat A \mat W + \mat \epsilon$ where $\mat W \sim \LMC(\mat I,\mat R_{j=1}^p)$. The rows $\mat w_1, \mat w_2, \dots, \mat w_p$ of $\mat W$ are the $p$ independent spatial processes underlying the LMC. In this context, the latent process $\mat W$ is {\it a priori} independent of the coregionalization matrix $\mat A$. In essence, since the latent process $\mat V$ is not observed, we can equivalently think of the complete data as $(\mat Y, \mat W)$. This leads to a joint likelihood of the form 
\begin{align}
p(\mat Y, \mat W | \mat A, \mat R_{j=1}^p, \mat D) = \frac{\exp[-\frac{1}{2} \sum_{j=1}^p \mat w_j^\top \mat R_j^{-1}\mat w_j -\frac{1}{2} \trace\{(\mat A \mat W - \mat Y)^\top \mat D^{-1} (\mat A \mat W - \mat Y)\}]}{ (2\pi)^{np} \prod_{j=1}^p \det(\mat R_j)^{1/2} \prod_{j=1}^p \tau_j^{n/2}}.\label{compWhite}
\end{align}
Expression \eqref{compWhite}, same as \eqref{compCent}, factorizes in individual terms containing inverses or determinants of potentially large correlation matrices $\mat R_j$. Those represent the crux of the computation when evaluating the likelihood. It seems we can benefit from similar computational advantages to those described in Section \ref{genLMC} by simply changing the parametrization. 

However, in the context where the independent variance terms $\tau_j$ along the diagonal of $\mat D$ are small, updates to $\mat A$ conditional on fixed $\mat W$ will tend to be very concentrated around the current value. Gibbs samplers that alternate updates to the latent variables $\mat W$ and model parameters $\mat A, \varphi_{j=1}^p, \tau_{j=1}^p$ are likely to exhibit high autocorrelation in the components of $\mat A$ for reasons that are well outlined in \citet{murray2010slice}. In other words, such a formulation of the problem is unlikely to perform very well when the signal-to-noise (StN) ratio is relatively high. In return, it is likely to perform better than the centered parametrization of \eqref{compCent} in the context where the StN ratio is low. However, such a scenario is not auspicious for statistical analysis in general. In Appendix \ref{interSec}, we explore interweaving strategies that can, in principle, benefit from the advantages of both parametrizations.

\section{Sparse LMC}\label{sparseApp}

As a function of dimensionality, the LMC introduces $p^2$ parameters. It makes for a model that is not parsimonious and this might cause overfitting (in the sense of deteriorating predictions at out-of-sample locations) in cases where $p$ is moderately large. There could also exist independencies among the processes; it would perhaps be expected in high dimensional settings. We address those issues by designing a prior that puts probability mass on coregionalization matrices that have structural zeros. We introduce the concept of a mask $\mat M$ which is a binary matrix of the same size as $\mat A$. The true coregionalization matrix is then the element-wise (Hadamard) product $\mat A \circ \mat M$. This matrix still needs to be of full rank such that none of the $p$ processes end up being deterministic linear combinations of the others. The triangular specification of the LMC discussed in Section \ref{triSpec} provides an example of a coregionalization matrix with structural zeros (everywhere over the main diagonal in this case). We argued that a lower triangular $\mat A$ is a model hypothesis that is hard to justify. In comparison, our sparse approach reduces the dimensionality of the parameter space by learning the structure of the coregionalization matrix in a principled and data-driven approach.

Same as before, we put i.i.d. normal priors on the elements of $\mat A$ which guarantees that $\det (\mat A) \neq 0$ with probability 1. However, this might not hold when setting some elements to zero. For example, setting an entire row or column of $\mat M$ to zero will always lead to $\mat A \circ \mat M$ being singular. For any given binary matrix $\mat M$ and assuming we assign a distribution to the elements of $\mat A$ that is absolutely continuous with respect to the $p^2$-dimensional Lebesgue measure, we have that $P(\det(\mat A \circ \mat M) = 0)$ is either 0 or 1. We detail below the process by which one can verify if a model, defined by the binary masking matrix $\mat M$, is admissible in the sense that the matrix $\mat A \circ \mat M$ will be of full rank with probability 1.

We start with the Laplace expansion along the $i^{\text{th}}$ row of the determinant
\begin{align}
\det( \mat A \circ \mat M) = \sum_{j=1}^p  a_{ij}  m_{ij} C_{ij} \label{laplaceExp}
\end{align}
where $C_{ij}$ is the $(i,j)$ cofactor: the determinant of the matrix obtained by removing the $i^{\text{th}}$ row and $j^{\text{th}}$ column of $\mat A \circ \mat M$ multiplied by $(-1)^{i+j}$. Equation \eqref{laplaceExp} will be equal to 0 if and only if every summand is identically 0.

The determinant can be expanded from any row or column. If any of those consists entirely of 0s, then expression \eqref{laplaceExp} will be equal to 0 with probability 1. This will be a stopping condition for the recursive algorithm designed to evaluate the admissibility of a model $\mat M$. The other stopping state is when there are globally less than $p$ zeros. In that case, expression \eqref{laplaceExp} is almost surely not equal to 0. Indeed, this is trivially true for $p=1$ and can be shown to be true for any $p$ by induction using the Laplace expansion. Equation \eqref{laplaceExp} allows us to transform a statement about a $p \times p$ matrix into verifying the same condition on smaller $(p-1) \times (p-1)$ matrices. The determinant will be identically zero if and only if all the determinants of the smaller matrices are identically 0 (those corresponding to the non-zero entries of the row or column we chose to expand from). A good strategy is thus to always expand the determinant by the row or column that contains the most 0s to minimize the number of recursive calls. 

This allows us to formulate a hierarchical prior where the model $\mat M$ is selected at the first stage, regardless of $\mat A$, among admissible ones. The respective entries of the coregionalization matrix are assigned at the later stage. The prior on the masking matrix is assumed to take the form
\begin{align}
p(\mat M) \propto \pi^{\sum m_{ij}} (1-\pi)^{p^2 - \sum m_{ij}}, \label{priorModel}
\end{align}
where $\sum m_{ij}$ is the number of 1s contained in $\mat M$ and $\pi \in [0,1]$ is a hyperparameter controlling prior belief on the degree of sparsity. It is essentially the prior probability for an entry of $\mat M$ to be non-zero, subject to the restrictions described above. We use the framework of reversible jump MCMC \citep{green1995reversible} to conceive proposals to the coregionalization matrix $\mat A \circ \mat M$ that allow trans-dimensional updates. At every step, we either propose to replace a non-zero entry with a 0 or to assign a new value to a previously null entry. We chain $p$ of those proposals at each iteration of the Markov chain. The computational overhead is negligible in the context where we also estimate range parameters $\varphi_{j=1}^p$, the updates to which are orders of magnitude more costly (see Appendix \ref{algosDetails} for more details). The model parameters $\mat A,\mat M, \varphi_{j=1}^p, \tau_{j=1}^p$ and the unobserved random fields $\mat V$ are updated successively in a Gibbs sampler. Next, we conduct an experiment designed to evaluate the effect on predictive accuracy of our sparsity-inducing approach. 

\subsection{Simulation Study} \label{simStud}

Model parameters for the sparse LMC include $\mat A, \mat M, \varphi_{j=1}^p, \tau_{j=1}^p$. The prior distribution on the submodel defined by the $p \times p$ binary matrix $\mat M$ is described by equation \eqref{priorModel}. We use $\pi=1/p$ throughout as our prior belief on the degree of sparsity. We put independent Gaussian priors on the elements of the coregionalization matrix $\mat A$. We use the exponential correlation function $\rho(\varphi_j|\mat s-\mat t|) = \exp(-\varphi_j|\mat s-\mat t|)$ for the underlying process $w_j(\cdot)$ and the range parameters $\varphi_1,\dots,\varphi_p$ are assigned uniform priors on $[3,30]$. Roughly speaking, this corresponds to a practical range at distances between 0.1 and 1 (points are observed in the unit square). We use this informative prior on the range parameters because they cannot be consistently estimated along with the marginal variances contained in the coregionalization matrix \citep{zhang2004inconsistent}. Finally, we assign independent inverse Gamma priors with high variance to the parameters $\tau_1,\dots,\tau_p$ to obtain conjugate full conditional updates. We employ the complete data likelihood of $\mat Y, \mat V$ described by equation \eqref{compCent}. Additional details on the Bayesian setup for this study are provided in Appendix \ref{MCMCsetup}.

We start by simulating a set of $n=100$ irregularly spaced locations from the uniform distribution on the unit square. We keep those locations fixed throughout the study. For each of $p=2,3,4,5$, we generate 100 realizations of the multivariate GP. The model is simulated according to the LMC for each of two examples. In the first, the coregionalization matrix $\mat A \circ \mat M$ is full: it has no structural zeros. All processes are interdependent. In the second example, every entry of $\mat M$ is zero except the main diagonal hence the $p$ resulting processes are mutually independent. In both cases, the rows of the matrix $\mat A$ are scaled so that each process $v_j(\cdot)$ has unit marginal variance. For the full $\mat M$ example, the coregionalization matrix is filled with the value $1/\sqrt{p}$ with entries above the diagonal being negative. For the diagonal example, $\mat A \circ \mat M$ is simply the identity matrix. We add an observational noise of variance $\tau_j = 0.25, j=1,\dots,p$ to the $p$ components at each location. For the range parameters $\varphi_{j=1}^p$, we split the $[5,25]$ interval with equal gaps on the logarithmic scale so for example we use (5, 7.48, 11.18, 16.72, 25) when $p=5$. This is to ensure the $p$ underlying processes are as distinguishable as possible while having range inside the [3,30] support. 

We run two MCMC inference algorithms on each LMC repetition. The first fits a regular LMC (without model selection) and the second fits the sparse LMC using reversible jumps. Both algorithms are described in Appendix \ref{algosDetails}. We run each algorithm for 2000 iterations, keeping only the last 1000. For each posterior sample of the latent random fields $\mat V$ and LMC parameters $\mat A, \mat M, \varphi_{j=1}^p$, we interpolate the processes to a regular $10\times10$ grid covering the domain. We compute the root mean square error (averaged over the grid locations and also across posterior samples) between the predictions and the held-out true values. We look at the difference in prediction error between the sparse and regular LMC fit. The distribution (across LMC repetitions) of this difference for the full and diagonal examples are respectively showcased in Figures \ref{full_exp} and \ref{diag_exp}.

\begin{figure}[h!]
     \centering
     \begin{subfigure}[b]{0.49\textwidth}
         \centering
         \includegraphics[width=\textwidth]{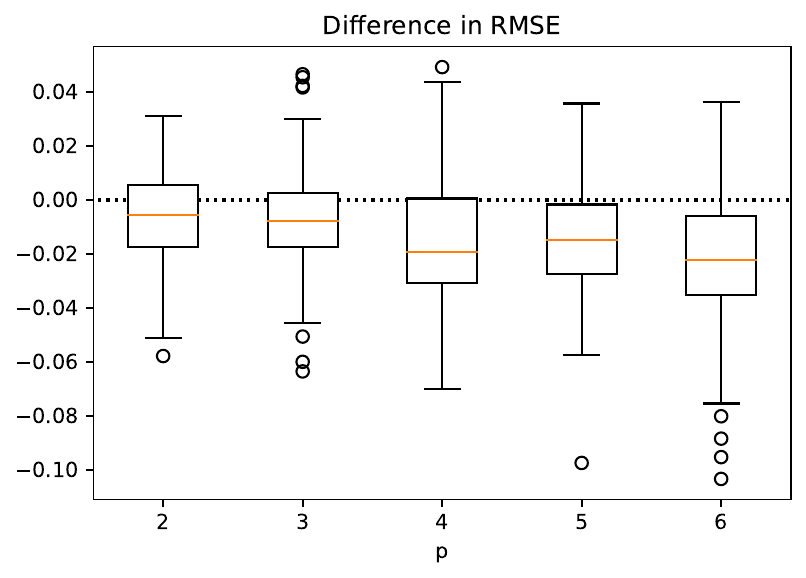}
     \end{subfigure}
     \hfill
     \begin{subfigure}[b]{0.46\textwidth}
        \centering
         \includegraphics[width=\textwidth]{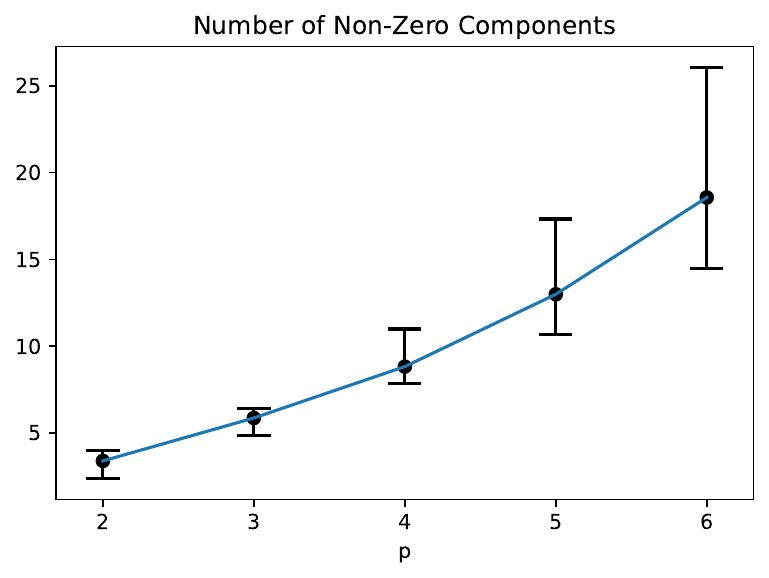}
    \end{subfigure}
        \caption{Comparison of our sparse method with the standard approach for the case of a full (no zeros) coregionalization matrix. On the left, the difference (sparse minus standard) in RMSE computed on each realization. On the right, we illustrate the posterior-mean number of non-zero components in the matrix $\mat M$. The median (point) and [0.05,0.95] quantiles are computed across the 100 simulated LMC realizations.}
        \label{full_exp}
\end{figure}

\begin{figure}[h!]
     \centering
     \begin{subfigure}[b]{0.49\textwidth}
         \centering
         \includegraphics[width=\textwidth]{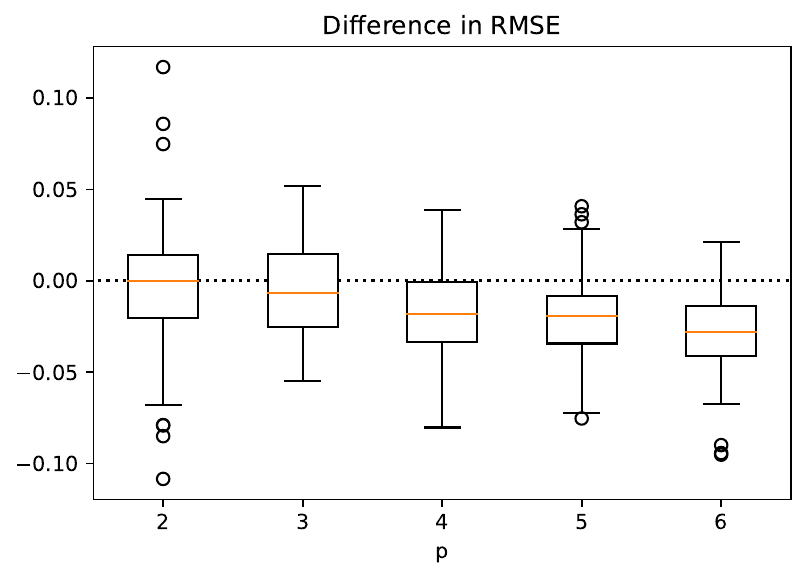}
     \end{subfigure}
     \hfill
     \begin{subfigure}[b]{0.46\textwidth}
        \centering
         \includegraphics[width=\textwidth]{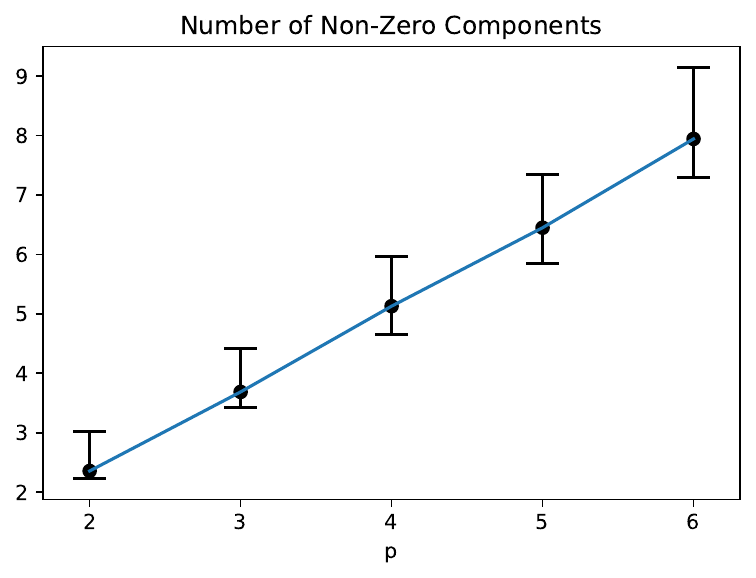}
    \end{subfigure}
        \caption{Comparison of our sparse method with the standard approach for the case of a diagonal coregionalization matrix (independent components). On the left, the difference (sparse minus standard) in RMSE computed on each realization. On the right, we illustrate the posterior-mean number of non-zero components in the matrix $\mat M$. The median (point) and [0.05,0.95] quantiles are computed across the 100 simulated LMC realizations.}
        \label{diag_exp}
\end{figure}

In lower dimension $p$, there is little difference in using either of the two estimation methods. However, as dimensions increase, predictions from the sparse model appear more accurate than those under the standard approach with no structural zeros. This is certainly expected for the diagonal example as there are only $p$ non-zero entries in the coregionalization matrix. It is perhaps more remarkable for the full $\mat M$ case where a misspecified but sparser model better generalizes to out-of-sample predictions. This is because the number $p^2$ of coregionalization parameters to estimate rapidly becomes unwieldy when compared to the number of observed locations. Our sparse approach facilitates the use of the LMC in such scenarios. We can also see on the right side of Figures \ref{full_exp} and \ref{diag_exp} that our method manages to curb the growth of parameters especially in the diagonal case where we observe a seemingly linear relation. 

\begin{table}[h!]
\centering\footnotesize
\begin{tabular}{|c|ccccc|}
\hline
&&&$p$&&\\
Method&2&3&4&5&6\\
\hline
Sparse &15.687 & 18.658 & 21.607 & 24.706 & 27.869 \\
Standard&15.741 & 18.477 & 21.205 & 24.038  & 26.906\\
\hline
\end{tabular}
\vspace{0.2cm}
\caption{Mean time (in seconds) elapsed per model fitting for the different scenarios.}\label{compTime}
\end{table}
In Table \ref{compTime}, we compare the computing time of the sparse and standard specifications. Our approach adds very little computing time when compared to the traditional LMC. This is because recomputing the likelihood upon a modification of the coregionalization matrix is negligible under our formulation \eqref{LMCdens} of the LMC density (see appendix \ref{algosDetails} for more details). In the next section, we use our sparse methodology on a real dataset to showcase the inferential properties of our model.

\subsection{Analysis of California Air Data} \label{anCaliData}

In this section, we analyze a data set consisting of measurements in the concentration of four pollutants: carbon monoxide (CO), nitric oxide (NO), nitrogen dioxide (NO$_2$) and ozone (O$_3$). There are no covariates included. Our interest is mainly in inferring the cross-covariance structure. We use the reversible jump sampler for the coregionalization matrix $\mat A \circ \mat M$. This will help us identify potential independencies among the measured quantities. We analyze the data on the log scale to achieve approximate normality. The measurements are made at monitoring sites spread across the state of California. Those locations, projected to the unit square, are illustrated in Figure \ref{monSites}. The data is available publicly on the California Air Resources Board website. We chose the date of 26/04/2002 as it had the highest number of sites where all 4 components were measured with 83. We also included 48 stations that had partial observations (at least one measurement missing) for a total number of 131. The proportion of missing values for the components CO, NO, NO$_2$ and O$_3$ were respectively 0.25, 0.10, 0.09 and 0.05. Data from this database was also analyzed in \citet{schmidt2003bayesian} although in the summer rather than the spring and without the ozone component. Nevertheless, the resulting inference we present here, while not identical, seems to be coherent with their findings in terms of the magnitude of correlations among the three components CO, NO and NO$_2$.

\begin{figure}[h!]
     \centering
     
        \includegraphics[width=0.4\textwidth]{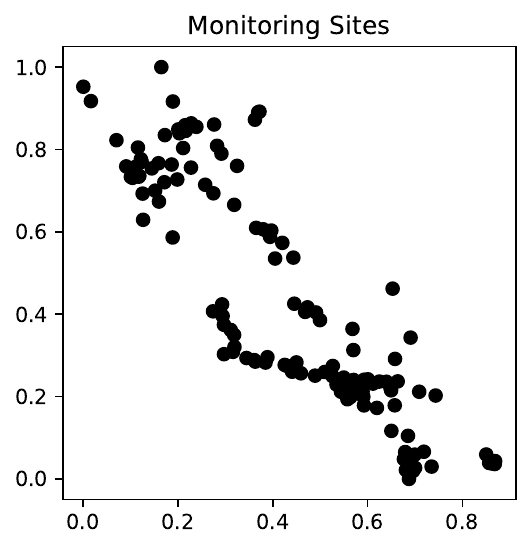}
        \caption{Projected locations of the 131 monitoring stations used in the analysis of the California air data.}
        \label{monSites}
\end{figure}

We assign independent Gaussian priors to the non-zero components of $\mat A$. The range components $\varphi_{j=1}^p$ are constrained to the uniform [3,30] distribution, {\it a priori}. The observational noise variance parameters $\tau_{j=1}^p$ are given inverse gamma prior distributions. The log concentrations of each of the four pollutants do not appear to share a common mean hence we add a component-specific mean structure to the LMC. The $p$-dimensional mean vector $\mat\mu$ is assigned a Gaussian distribution with diagonal covariance to obtain conjugate normal full conditional updates. Finally, we use $1/p=0.25$ for the hyperparameter $\pi$ that controls the degree of sparsity. Additional details on the MCMC setup can be found in Appendix \ref{MCMCsetup}. We run the algorithm for 200,000 iterations, discarding the first 50,000 as burn-in.

The posterior samples of the covariance and correlation matrices are summarized in Figure \ref{covCorrmats}. We can see some of the credible intervals have boundaries that are exactly 0. This is a consequence of assigning prior probability mass at 0 for the elements of the coregionalization matrix $\mat A \circ \mat M$. Another interesting component of our sparsity-inducing approach to the LMC is that we can evaluate the posterior probability of two processes to be independent. Those probabilities are illustrated in Figure \ref{probInd} along with the distribution of non-zero elements in the matrix $\mat M$. It seems to potentially indicate that the CO pollutant could be uncorrelated with the other three. Further evidence would be needed before reaching this conclusion, but this is not the object of this paper. Modeling pairwise dependencies among the other components seems to be worth it since they have probability 0 of being independent. In Figure \ref{probIndDum}, we show the posterior probabilities of independence in an alternate scenario where we simulate an independent fifth component (D). We can see that our model appropriately captures this independency.

\begin{figure}[h!]
     \centering
     \footnotesize 
\begin{tabular}{|c|cccc|}
\hline
     & CO & NO & NO$_2$ & O$_3$\\
     \hline
CO &\begin{tabular}[c]{@{}c@{}}0.13\\ (0.00;0.41)\end{tabular}   &     &     &   \\
NO &\begin{tabular}[c]{@{}c@{}}0.17\\ (0.00;0.41)\end{tabular}   & \begin{tabular}[c]{@{}c@{}}0.65\\ (0.36;1.05)\end{tabular}    &     & \\
NO$_2$ &\begin{tabular}[c]{@{}c@{}}0.09\\ (0.00;0.30)\end{tabular}   & \begin{tabular}[c]{@{}c@{}}0.46\\ (0.28;0.75)\end{tabular}    & \begin{tabular}[c]{@{}c@{}}0.52\\ (0.32;0.92)\end{tabular}    &  \\
O$_3$ & \begin{tabular}[c]{@{}c@{}}-0.02\\ (-0.07;0.00)\end{tabular} & \begin{tabular}[c]{@{}c@{}}-0.10\\ (-0.17;-0.06)\end{tabular} & \begin{tabular}[c]{@{}c@{}}-0.09\\ (-0.16;-0.05)\end{tabular} & \begin{tabular}[c]{@{}c@{}}0.03\\ (0.01;0.05)\end{tabular}  \\
\hline
\end{tabular}

    \caption{ Posterior median of the marginal (at each location) covariance matrix along with the 0.05 and 0.95 quantiles in parenthesis.}
    \label{covCorrmats}
\end{figure}

\begin{figure}[h!]
     \centering
     \begin{subfigure}[b]{0.49\textwidth}
     \footnotesize
     \centering
     \begin{tabular}{|c|cccc|}
     \hline
    & CO & NO & NO$_2$ & O$_3$\\
     \hline&&&&\\
CO&0.0000 &  &  &  \\&&&&\\
NO&0.2243 & 0.0000 &  &  \\&&&&\\
NO$_2$&0.3785 & 0.0000 & 0.0000 &  \\&&&&\\
O$_3$&0.3804 & 0.0000 & 0.0000 & 0.0000 \\&&&&\\
\hline
\end{tabular}
\vspace{0.5cm}
\caption{Pairwise Probability of Independence}
     \end{subfigure}
     \hfill
     \begin{subfigure}[b]{0.49\textwidth}
     \centering
     \includegraphics[height=0.52\textwidth]{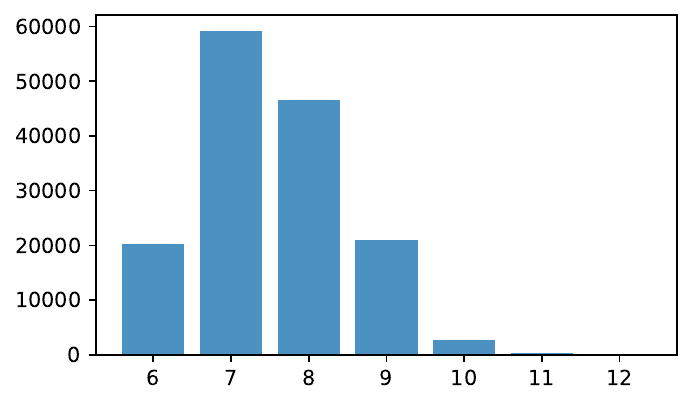}
\caption{Number of ones in $\mat M$ (out of $p^2 = 16$ entries)}
    \end{subfigure}
    \caption{}
    \label{probInd}
\end{figure}

\begin{figure}[h!]
     \centering\footnotesize 
        \begin{tabular}{|c|ccccc|}
     \hline
    & CO & NO & NO$_2$ & O$_3$&D\\
     \hline&&&&&\\
CO&0.0000 &  &  &  & \\&&&&&\\
NO&0.1187 & 0.0000 &  &  &  \\&&&&&\\
NO$_2$&0.2673 & 0.0000 & 0.0000 &  &  \\&&&&&\\
O$_3$&0.3776 & 0.0000 & 0.0000 & 0.0000  & 
\\&&&&&\\
D&0.6037 & 0.6066 & 0.8207 & 0.8623 & 0.0000\\&&&&&\\
\hline
\end{tabular}
    
        \caption{Pairwise probability of independence with the component D being an independent Gaussian process with mean -3.92, unit variance and range parameter equal to 10.}
        \label{probIndDum}
\end{figure}

\section{Discussion}

We revisited the LMC as a constructive approach to specifying cross-covariance functions, deriving some important characteristics in the process. Starting with the separable specification of the LMC and its link with the matrix normal distribution, we were able to make an analogous connection between the more general LMC and a new kind of Gaussian distribution for matrices. Doing so exposed a covariance structure that can be exploited to obtain likelihood evaluations in a number of operations that is linear in the dimensionality $p$. Computational scalability might just be the strongest suit of the LMC, a fact that has not been taken advantage of in the literature about multivariate Gaussian processes. Since this is not an approximation, there are only benefits to using the formulation we describe in Section \ref{genLMC}. 

Beyond the plain computational speedups, there are other advantages to the matrix normal representation of the LMC. Foremost, the coregionalization matrix $\mat A$ is not entangled with the spatial correlation matrices $\mat R_1,\dots,\mat R_p$ in the likelihood. In particular, an update to $\mat A$ does not require computing the inverse and determinant of any $np \times np$ sized matrix. Even though it is generally small in comparison with the spatial correlation matrices, the matrix $\mat A$ contains more parameters and can become difficult to estimate whenever $p$ is moderately large. However, updates to $\mat A$ come cheap so we can incorporate more involved samplers for it without much of an effect on the total computing time. We exploit this fact in a new model specification for the LMC where probability mass is assigned to coregionalization matrices with structural zeros. Doing so allows us to curb the quadratic growth (as a function of $p$) of the number of parameters associated with the LMC. This type of regularization benefits the accuracy of predictions as was observed on out-of-sample data in Section \ref{sparseApp}. Moreover, our sparse approach to the LMC can provide interesting insights by discovering potential independencies among the $p$ modeled components.

In terms of future developments and applications, we can think of a few. First, since the computational structure is preserved under conditioning, we would like to see it exploited and fully integrated into the framework of Vecchia approximations like the nearest neighbor approach. This would in principle be a straightforward application of Propositions \ref{LMCdensRes} and \ref{LMCcondDist}. It would form a class of highly scalable (as a function of both $n$ and $p$) multivariate Gaussian models. Also, we used the flexibility of the matrix normal formulation to propose a sparse version of the LMC along with the appropriate trans-dimensional MCMC algorithm. We believe researchers can get even more creative and design samplers that facilitate estimation in particular contexts and applications. Finally, while we focus on spatial covariance functions throughout this paper, any structure on the underlying independent processes can be employed. For example, the dependence in time series applications can often be modeled using Markovian assumptions which induce a likelihood that can be decomposed into a series of easier evaluated factors. Introducing cross-process dependence through a linear transformation (applied at each time point) of independent processes ({\it à la} LMC) would, in principle, imply a matrix normal distribution with likelihood evaluations that scale linearly in both dimensions.

% \subsubsection*{Acknowledgments}
% The authors would like to thank Professor Cédric Beaulac for comments on earlier drafts.

\subsubsection*{Funding}
Alie, Stephens and Schmidt acknowledge support from the Natural Sciences and Engineering Research Council of Canada (NSERC). Alie also acknowledges the support from the Fonds de recherche du Québec – Nature et technologies (FRQNT).

%Bibliography
\bibliographystyle{plainnat}  
\bibliography{references}  

\appendix
 
 \section{Proof of Proposition \ref{LMCdensRes}} \label{proofs}

We can think of a vectorized form $\mat w' = (\mat w_1,\mat w_2,\dots,\mat w_p)^\top$ for the base processes where we stack the $p$ $n$-dimensional processes instead of ordering by locations. This is how it is presented in \citet{jin2007order} in the context of multivariate conditional auto-regressive models rather than continuously indexed spatial data. The authors get fairly far along the argument we use here.

Since we start with independent processes, the $np \times np$ covariance matrix of $\mat w'$ is block diagonal with blocks consisting of the $n\times n$ correlation matrices $\mat R_1,\dots,\mat R_p$. In the usual representation (ordered by locations), we would consider the $np$-dimensional vector $\mat w = (\mat w(\mat s_1)^\top, \mat w(\mat s_2)^\top,\dots,\mat w(\mat s_n)^\top)^\top$. Of course, $\mat w$ is obtained by a permutation of the elements of $\mat w'$ so there exists a permutation matrix $\mat P$ for which $\mat w = \mat P \mat w'$ and $\var(\mat w) = \mat P \diag\{\mat R_1,\dots,\mat R_p\} \mat P^\top$.

The multivariate process $\mat v = (\mat v(\mat s_1)^\top, \mat v(\mat s_2)^\top,\dots,\mat v(\mat s_n)^\top)^\top$ with cross-dependence is obtained from $\mat w$ by transforming at each location $\mat s$ according to $\mat v(\mat s) = \mat A \mat w (\mat s)$. For the vectorized forms, this means that $\mat v$ is obtained by multiplying a block diagonal matrix consisting of $n$ $\mat A$ matrices along the diagonal with $\mat w$. Putting all the above together, we can write the covariance of the vectorized $\mat v$ as
\begin{align}
\var(\mat v) = 
\left(\begin{matrix}
\mat A & \mat 0 & \dots & \mat 0\\
\mat 0 & \mat A & \dots & \mat 0\\
\vdots & \vdots & \ddots & \vdots\\
\mat 0 & \mat 0 & \dots & \mat A
\end{matrix}\right) 
\mat P
\left(\begin{matrix}
\mat R_1 & \mat 0 & \dots & \mat 0\\
\mat 0 & \mat R_2 & \dots & \mat 0\\
\vdots & \vdots & \ddots & \vdots\\
\mat 0 & \mat 0 & \dots & \mat R_p
\end{matrix}\right) 
\mat P^\top
\left(\begin{matrix}
\mat A^\top & \mat 0 & \dots & \mat 0\\
\mat 0 & \mat A^\top & \dots & \mat 0\\
\vdots & \vdots & \ddots & \vdots\\
\mat 0 & \mat 0 & \dots & \mat A^\top
\end{matrix}\right). \label{blockCov}
\end{align}

\begin{lemma} \label{invLemma}
    The inverse for the covariance matrix of the vectorized form $(\mat v(\mat s_1)^\top,\dots,\mat v(\mat s_n)^\top)^\top$ of the LMC can be computed as 
    $$
    \left(\sum_{j=1}^p \mat R_j \otimes \mat a_j \mat a_j^\top\right)^{-1} = \sum_{j=1}^p \mat R_j^{-1} \otimes \mat a_j^{-\top} \mat a_j^{-1},
    $$
    where $\mat a_j^{-1}$ is the $j^{\text{th}}$ row of the inverse matrix $\mat A^{-1}$.
\end{lemma}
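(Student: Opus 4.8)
The plan is to prove the identity by direct verification: I will multiply the candidate inverse on the right of $\sum_{j=1}^p \mat R_j \otimes \mat a_j \mat a_j^\top$ and check that the product collapses to the identity $\mat I_{np}$. Since the matrix in question is square, exhibiting a single one-sided identity is enough to conclude it is the genuine two-sided inverse.

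First I would expand the product using the mixed-product property $(\mat A \otimes \mat B)(\mat C \otimes \mat D) = (\mat A \mat C)\otimes(\mat B \mat D)$ of the Kronecker product, obtaining a double sum
$$
\left(\sum_{j=1}^p \mat R_j \otimes \mat a_j \mat a_j^\top\right)\left(\sum_{k=1}^p \mat R_k^{-1} \otimes \mat a_k^{-\top} \mat a_k^{-1}\right) = \sum_{j=1}^p\sum_{k=1}^p \left(\mat R_j \mat R_k^{-1}\right) \otimes \left(\mat a_j \,\mat a_j^\top \mat a_k^{-\top}\, \mat a_k^{-1}\right).
$$
The key observation is that the middle factor $\mat a_j^\top \mat a_k^{-\top}$ is a scalar. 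Writing it as $(\mat a_k^{-1}\mat a_j)^\top$, it is the product of the $k^{\text{th}}$ row of $\mat A^{-1}$ with the $j^{\text{th}}$ column of $\mat A$, i.e. the $(k,j)$ entry of $\mat A^{-1}\mat A = \mat I$, so it equals $\delta_{jk}$. Consequently every off-diagonal ($j\neq k$) term vanishes and the double sum reduces to
$$
\sum_{j=1}^p \left(\mat R_j \mat R_j^{-1}\right) \otimes \left(\mat a_j \mat a_j^{-1}\right) = \mat I_n \otimes \left(\sum_{j=1}^p \mat a_j \mat a_j^{-1}\right).
$$
Here $\mat a_j$ (a $p\times 1$ column of $\mat A$) times $\mat a_j^{-1}$ (a $1\times p$ row of $\mat A^{-1}$) is a rank-one outer product, and summing these over $j$ is exactly the column-times-row expansion of $\mat A \mat A^{-1} = \mat I_p$. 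Hence the whole expression is $\mat I_n \otimes \mat I_p = \mat I_{np}$, which proves the claim.

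The main thing to keep straight — and the only real subtlety — is the dimensional bookkeeping together with the fact that two complementary identities are at work: the relation $\mat A^{-1}\mat A = \mat I$ supplies the scalar orthogonality $\mat a_k^{-1}\mat a_j = \delta_{jk}$ that annihilates the cross terms, whereas $\mat A\mat A^{-1} = \mat I$ supplies the completeness $\sum_j \mat a_j \mat a_j^{-1} = \mat I_p$ that reassembles the surviving diagonal terms. An alternative derivation would start from the block factorization in \eqref{blockCov}, inverting each factor via the orthogonality of the permutation matrix $\mat P$ and the identity $(\mat I_n \otimes \mat A)^{-1} = \mat I_n \otimes \mat A^{-1}$; this works but requires tracking $\mat P$ explicitly and is less transparent than the direct verification above.
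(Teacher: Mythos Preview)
Your proof is correct and follows exactly the ``direct verification'' route that the paper names first; you have simply written out the computation that the paper leaves to the reader, using the orthogonality $\mat a_k^{-1}\mat a_j=\delta_{jk}$ to kill cross terms and the completeness $\sum_j \mat a_j\mat a_j^{-1}=\mat I_p$ to reassemble the identity. The block-factorization alternative you mention at the end is precisely the argument the paper chooses to display via equation \eqref{blockCov}, so there is no substantive difference between your approach and the paper's.
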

\begin{proof}
This can be shown by direct verification. Alternatively, from equation \eqref{blockCov} we can guess the inverse 
$$
\left(\begin{matrix}
\mat A^{-\top} & \mat 0 & \dots & \mat 0\\
\mat 0 & \mat A^{-\top} & \dots & \mat 0\\
\vdots & \vdots & \ddots & \vdots\\
\mat 0 & \mat 0 & \dots & \mat A^{-\top}
\end{matrix}\right) 
\mat P
\left(\begin{matrix}
\mat R_1^{-1} & \mat 0 & \dots & \mat 0\\
\mat 0 & \mat R_2^{-1} & \dots & \mat 0\\
\vdots & \vdots & \ddots & \vdots\\
\mat 0 & \mat 0 & \dots & \mat R_p^{-1}
\end{matrix}\right) 
\mat P^\top
\left(\begin{matrix}
\mat A^{-1} & \mat 0 & \dots & \mat 0\\
\mat 0 & \mat A^{-1} & \dots & \mat 0\\
\vdots & \vdots & \ddots & \vdots\\
\mat 0 & \mat 0 & \dots & \mat A^{-1}
\end{matrix}\right)
$$
which is exactly the claimed solution. 
\end{proof}

\begin{lemma}
    The determinant for the covariance matrix of the vectorized form $(\mat v(\mat s_1)^\top,\dots,\mat v(\mat s_n)^\top)^\top$ of the LMC can be computed as  
    $$
    \det\left(\sum_{j=1}^p \mat R_j \otimes \mat a_j \mat a_j^\top\right) = \det(\mat A)^{2n} \prod_{j=1}^p \det(\mat R_j).
    $$
 \end{lemma}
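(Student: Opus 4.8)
The plan is to exploit the factorization of the covariance matrix already displayed in equation \eqref{blockCov} together with the multiplicativity of the determinant. Equation \eqref{blockCov} expresses $\var(\mat v)$ as a product of five matrices: the block-diagonal matrix $\diag\{\mat A, \dots, \mat A\}$ with $n$ copies of $\mat A$, the permutation matrix $\mat P$, the block-diagonal matrix $\diag\{\mat R_1, \dots, \mat R_p\}$, then $\mat P^\top$, and finally $\diag\{\mat A^\top, \dots, \mat A^\top\}$. Since the determinant is multiplicative, I would immediately split $\det(\var(\mat v))$ into the product of the five individual determinants.

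The key steps are then to evaluate each factor. The determinant of a block-diagonal matrix equals the product of the determinants of its blocks, so $\det(\diag\{\mat A,\dots,\mat A\}) = \det(\mat A)^n$ and likewise $\det(\diag\{\mat A^\top,\dots,\mat A^\top\}) = \det(\mat A^\top)^n = \det(\mat A)^n$, while $\det(\diag\{\mat R_1,\dots,\mat R_p\}) = \prod_{j=1}^p \det(\mat R_j)$. Because $\mat P$ is a permutation matrix, $\det(\mat P) = \pm 1$, so the two permutation factors contribute $\det(\mat P)\det(\mat P^\top) = \det(\mat P)^2 = 1$. Collecting the pieces yields $\det(\mat A)^n \cdot 1 \cdot \prod_{j=1}^p \det(\mat R_j) \cdot \det(\mat A)^n = \det(\mat A)^{2n}\prod_{j=1}^p \det(\mat R_j)$, which is the claimed identity.

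There is essentially no difficult step here; once the factored form \eqref{blockCov} is in hand, the result is mechanical. The only identity worth a line of justification is $\det(\diag\{\mat A,\dots,\mat A\}) = \det(\mat A)^n$, which follows from the standard fact that the determinant of a block-diagonal matrix is the product of its diagonal blocks' determinants (provable by repeated cofactor expansion). An alternative, more in the spirit of Lemma \ref{invLemma}, would be to reuse the separable-case Kronecker identity $\det(\mat R \otimes \mat \Sigma) = \det(\mat R)^p \det(\mat \Sigma)^n$ recalled in Section \ref{separable}; but since the non-separable covariance is a sum rather than a single Kronecker product, the clean and self-contained route is the determinant-of-a-product argument above.
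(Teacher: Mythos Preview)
Your proposal is correct and follows exactly the paper's approach: the paper's proof is simply ``Evident from equation \eqref{blockCov} since the determinant of the permutation matrix $\mat P$ is either 1 or $-1$,'' and you have spelled out precisely the routine block-diagonal and multiplicativity computations that this sentence is leaving implicit. Nothing needs to change.
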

\begin{proof}
Evident from equation \eqref{blockCov} since the determinant of the permutation matrix $\mat P$ is either 1 or -1.
\end{proof}

The final step to prove Proposition \ref{LMCdensRes} is to express the $p$ quadratic products $\mat v^\top \left( \mat R_j^{-1} \otimes \mat a_j^{-\top} \mat a_j^{-1} \right) \mat v$ using the matrix form $\mat V$ of the LMC. The $np$-vector $\mat v$ is obtained by stacking the columns of $\mat V$ hence we write $\mat v = \vec(\mat V)$. We can then use the well known identities $(\mat C^\top \otimes \mat A) \vec(\mat B)  = \vec(\mat A \mat B \mat C)$ and $\vec(\mat A)^\top\vec(\mat B) = \trace(\mat A^\top \mat B)$ in this order to obtain
$$
\vec(\mat V)^\top \left( \mat R_j^{-1} \otimes \mat a_j^{-\top} \mat a_j^{-1} \right) \vec(\mat V) = \trace(\mat V^\top \mat a_j^{-\top} \mat a_j^{-1} \mat V \mat R_j^{-1})
$$
and use the cyclic property of the trace to express it as the scalar $\mat a_j^{-1} \mat V \mat R_j^{-1}\mat V^\top \mat a_j^{-\top}$.

The argument can be recycled for models where the coregionalization matrix varies over locations. Such is the case for the non-stationary LMC proposed by \citet{gelfand2004nonstationary} where the matrix $\mat A$ varies smoothly across space. In that case, we can index the coregionalization matrix $\mat A_i$ with location numbers $i=1,2,\dots,n$. Those would appear in sequence in place of the singular $\mat A$ on the diagonal in equation \eqref{blockCov}. This implies that the inverse and determinant of the covariance matrix implied by the nonstationary version of the LMC can also be computed from those of the individual matrices; this time the $n$ $p\times p$ coregionalization matrices $\mat A_i$ and the $p$ $n\times n$ spatial correlation matrices $\mat R_j$.

 \section{Identifiability} \label{identif}

In the case where exponential correlation functions $\rho_j(\mat s, \mat t) = \exp(-\varphi_j|\mat s - \mat t|)$ are assigned to the base processes $w_j(\cdot), j=1,\dots,p$, we will show that one can identify both the full rank coregionalization matrix $\mat A$ along with range parameters $\varphi_1,\dots,\varphi_p$. We will assume without loss of generality that the range parameters are in increasing order such that $\varphi_1<\dots<\varphi_p$. To be clear, we mean that provided there is a sufficient number $n$ of points (to be clarified below), parameters $\mat A,\varphi_{j=1}^p$ and $\mat B,\psi_{j=1}^p$ will lead to the same model if and only if $\varphi_j = \psi_j$ for all $j$ and $\mat A = \mat B $ up to column sign switches. This is unlike the separable case ($\varphi_1=\varphi_2=\dots=\varphi_p$) where the model is a function of the product $\mat A \mat A^\top$. There exists an infinity of equivalent such factorizations. Standard examples include the lower triangular Cholesky decomposition or the positive definite square root. 

To establish this result, we first look at the cross-covariance function of components $i$ and $j$ evaluated at distance $d$:
$$
C_{ij}(d) = \sum_{k=1}^p a_{ik}a_{jk}\exp(-\varphi_k d).
$$
We postulate for a contradiction that another set of parameters $\mat B,\psi_{j=1}^p$ leads to the same model. This leads to equations  
\begin{align}
f_{ij}(d) := \sum_{k=1}^p  a_{ik} a_{jk}\exp(-\varphi_k d) - \sum_{\ell=1}^p b_{i\ell}b_{j\ell}\exp(-\psi_\ell d) = 0, \qquad i,j=1,\dots,p.\label{LagPoly}
\end{align}
According to Laguerre's generalization \citep{laguerre1884quelques} of Descartes' rule of signs, each $f_{ij}$ above either has at most $2p-1$ positive roots or is identically 0 everywhere. We debunk the latter first.

For $f_{ij}$ to be zero everywhere, it needs to be the case that $\varphi_k = \psi_k$ for all $k$. Otherwise, if there is a $\varphi_k$ that is not equal to any of the alternative range parameters, then it means in particular that the coefficients $a_{ik}^2=0$ for each $i=1,\dots,p$. This signifies a column of 0s in $\mat A$ which is not allowed ($\mat A$ is of full rank). So we can assume $\varphi_k = \psi_k, k=1,\dots,p$ since both sets are ordered. 

Consequently, for the $f_{ii}(d)$ to be 0 everywhere we now need the coefficients $ a_{ik}^2 - b_{ik}^2$ to be equal to zero for all $i,k=1,\dots,p$; every entry of $\mat B$ is equal to the corresponding entry in $\mat A$ up to a change of sign. We also need $a_{ik} a_{jk} - b_{ik}b_{jk}=0$ for every $j=1,\dots,p$ so if an entry $a_{ik} = - b_{ik}$, then the whole column needs to be sign flipped. For all $f_{ij}(d)$ to be zero everywhere, the range parameters need to be the same, and $\mat A$ and $\mat B$ are equal up to column sign changes.

However, it may be the case that the $2p-1$ possible roots of the $f_{ij}$ functions are enough to obtain 2 equivalent models at least at the locations observed. However, if there are more than $2p-1$ distinct distances among the points observed, then there wouldn't be enough roots for the 2 models to agree on all of those. For example, let's say the multivariate process $\mat v(\mat s)$ is observed at locations contained in the plane $\R^2$. If there are $n$ such points, then we know from the Erd{\"o}s distinct distances problem \citep{erdos1946sets} that there are at the very least $\sqrt{n-3/4} - 1/2$ different pairwise distances. It means that if there is a minimum of $(2p-1/2)^2 + 3/4$ points, then the LMC is guaranteed to be identifiable in this context. Most often, points come in no regular pattern and hence present $n(n-1)/2$ distinct distances (as many as there are pairs of points) and such a number usually exceeds $2p-1$ by a large amount.

\section{Cross-Covariance Functions}\label{ccfuncalt}

There is a vast literature on cross-covariance functions for multivariate Gaussian processes. We review some of the modeling advantages and drawbacks, specifically when compared to the LMC. 

In the approach of \citet{apanasovich2010cross} based on latent dimensions, a $p$-variate cross-covariance function is defined from a single univariate one over $\R^{d+k}$ ($1 \leq k \leq p$) by
\begin{align}
C_{ij}(\mat s, \mat t) = C\{(\mat s, \mat \xi_i),(\mat t, \mat \xi_j)\}, \qquad \text{with } \mat s,\mat t \in \R^d; \mat \xi_i,\mat \xi_j \in \R^k. \label{latDim}
\end{align}
The cross-covariance above is valid by construction. The authors use this idea to generalize the class of space-time covariance functions described in \citet{gneiting2002nonseparable} to multivariate data. Ignoring the time component for simplicity, the framework described by equation \eqref{latDim} is employed to construct a cross-covariance function of the form
\begin{align}
C_{ij}(|\mat s - \mat t|) = \frac{\sigma^2}{|\mat \xi_i - \mat \xi_j| + 1} \exp\left(\frac{-\alpha |\mat s - \mat t|}{(|\mat \xi_i - \mat \xi_j| + 1)^{\beta/2}}\right). \label{sepParam}
\end{align}
The interesting part of expression \eqref{sepParam} is the presence of a single parameter $\beta \in [0,1]$ controlling separability. Another notable fact is how parsimonious model \eqref{sepParam} is in terms of parameters. It involves $p$ locations $\mat \xi_1,\dots,\mat \xi_p$, each one in $\R^k$, along with the parameters of the single covariance function over $\R^{d+k}$. In comparison, the LMC has $p^2$ components in its coregionalization matrix $\mat A$ plus the individual parameters contained in the $p$ distinct univariate correlation functions $\rho_j(\cdot)$. We propose an approach to curb this potential over-parametrization problem in Section \ref{sparseApp}.

An inconvenience of using \eqref{latDim}, or the specific instance \eqref{sepParam} of it, is that it leads to identical marginal models for each of the $p$ processes. Perhaps less apparent is the fact that the cross-covariance (for $i \neq j$) functions under \eqref{latDim} are themselves valid univariate covariance functions. For general $d$ and stationary-isotropic covariances that are a function of distance $|\mat s - \mat t|$, that restricts cross-covariances to the class of completely monotone functions (see \citet[Section 2.5.1]{cressie1993statistics}). In particular, negative correlation across processes is prohibited at any distance. \citep{apanasovich2010cross} propose to circumvent those limitations by using linear combinations of multiple processes, something that is very similar in spirit to what the LMC does. 

In terms of univariate correlation functions, there are perhaps none more celebrated than those of the Mat\'ern class \citep{matern1960spatial} defined by
\begin{align}
M(\mat s-\mat t| \nu,\varphi) = \frac{2^{1-\nu}}{\Gamma(\nu)} (\varphi |\mat s-\mat t|)^\nu K_\nu(\varphi |\mat s-\mat t|), \label{maternCorr}
\end{align}
where $K_\nu$ is the modified Bessel function and $\varphi$ is a range parameter. Smoothness is handled by $\nu$ in the sense that any stationary random field with correlation function \eqref{maternCorr} has $\lceil \nu \rceil - 1$ differentiable sample paths in the mean-square sense \citep{guttorp2006studies}. The Mat\'ern class includes the exponential correlation function, among others, as a special case when $\nu=1/2$. 

The objective of a multivariate Mat\'ern covariance function is to have Mat\'ern marginal covariances for the $p$ processes while still allowing cross-dependence among them. \citet{gneiting2010matern} specified a stationary cross-covariance function of the form,
\begin{align}
C_{ij}(|\mat s - \mat t|) = \sigma_{ij} M(\mat s-\mat t| \nu_{ij},\varphi_{ij}), \qquad i,j=1,\dots,p.\label{maternCrossCov}
\end{align}
Conditions need to be imposed on parameters $\nu_{ij},\varphi_{ij}$ to ensure a positive definite covariance. The authors first propose a parsimonious model obtained by applying the covariance convolution approach of \citet{majumdar2007multivariate} (which is itself a rich conceptual framework for constructing valid multivariate covariance functions). However, this approach imposes strong restrictions on the parameters of \eqref{maternCrossCov}. It requires every range parameter $\varphi_{ij}$ to have the same value but still allows each process to have its own differentiability parameter $\nu_{ii} > 0$.

 \citet{gneiting2010matern} also described a more general formulation for $p=2$ by establishing a sufficient condition on $\sigma_{ij}$ under which \eqref{maternCrossCov} constitutes a valid model. This method was extended to higher dimensions in \citet{apanasovich2012valid} although the conditions therein are a bit more involved. The authors do provide some helpful specific case examples. 

A multivariate Mat\'ern model has a specific smoothness parameter for each of the $p$ processes. For the LMC, each component is defined as the linear combination $v_i(\mat s) = \sum_{j=1}^p  a_{ij} w_j(\mat s)$ and therefore will have sample paths that are as differentiable as the roughest of underlying processes $w_j(\cdot)$ (unless structural zeros are imposed on $\mat A$ such as in the triangular case). However, multivariate cross-covariance functions of the form described in \eqref{maternCrossCov} for $i \neq j$ are restricted to monotone functions, so either $C_{ij}$ is positive everywhere or it is negative everywhere. Cross correlation functions implied by the LMC can be positive at close range and negative elsewhere.

Finally, we highlighted in Section \ref{genLMC} how the structure of the LMC can be exploited to allow evaluation of the likelihood in linear computational complexity when it comes to the number of components $p$. For the alternatives we discussed, computations still scale with $p^3$ and therefore quickly become prohibitive for high dimensional processes.

\section{Proof of Proposition \ref{LMCcondDist}} \label{proofCond}

Proposition \ref{LMCcondDist} can be verified by direct and standard calculations. Let $\mat R_j$ be defined as in equation \eqref{fullCorr}, the covariance matrix of the vector $(\mat v_\old ^\top,\mat v_\new ^\top)^\top$ (with $\mat v_\old = \vec(\mat V_\old)$ and $\mat v_\new = \vec(\mat V_\new)$) is of the form $\sum_{j=1}^p \mat R_j \otimes \mat a_j \mat a_j^\top$. Standard conditional results concerning multivariate Gaussians can be applied. We start by first deriving the conditional mean:
$$
E[\mat v_\new | \mat v_\old] = \left( \sum_{i=1}^p \mat R_{\new,\old}^{(i)} \otimes \mat a_i \mat a_i^\top\right)
\left(\sum_{j=1}^p \mat R^{(j)^{-1}}_\old \otimes \mat a_j^{- \top} \mat a_j^{-1}\right)\mat v_\old.
$$
We apply the mixed product property to simplify the first matrix multiplication to 
$$
\sum_{j=1}^p \mat R_{\new,\old}^{(j)} \mat R^{(j)^{-1}}_\old \otimes \mat a_j \mat a_j^{-1}.
$$
Note that the double sum vanishes because $\mat a_i^\top \mat a_j^{- \top}$ is equal to zero whenever $i\neq j$ and 1 otherwise. The conditional mean, in matrix form, is
$$
\sum_{j=1}^p \mat a_j \mat a_j^{-1} \mat V_\old \mat R^{(j)^{-1}}_\old \mat R_{\old,\new}^{(j)}.
$$

For the conditional variance, we have 
\begin{align*}
\var(\mat v_\new | \mat v_\old) = &\sum_{j=1}^p \mat R_\new^{(j)} \otimes \mat a_j \mat a_j^\top 
\\ & \qquad - \left( \sum_{j=1}^p \mat R_{\new,\old}^{(j)} \otimes \mat a_j \mat a_j^\top\right)
\left(\sum_{j=1}^p \mat R^{(j)^{-1}}_\old \otimes \mat a_j^{- \top} \mat a_j^{-1}\right)\left( \sum_{j=1}^p \mat R_{\old,\new}^{(j)} \otimes \mat a_j \mat a_j^\top\right).
\end{align*}
Using the same arguments as for the conditional mean, we can simplify to
$$
\var(\mat v_\new | \mat v_\old) = \sum_{j=1}^p \{\mat R_\new^{(j)} - \mat R_{\new,\old}^{(j)}\mat R^{(j)^{-1}}_\old\mat R_{\old,\new}^{(j)}\}    \otimes \mat a_j \mat a_j^\top .
$$
The LMC covariance structure is preserved under conditioning.

\section{Interweaving}\label{interSec}

Components of the model are $\mat V,\mat W,\mat A, \varphi_{j=1}^p, \tau_{j=1}^p$. Model parameters are $\mat A, \varphi_{j=1}^p, \tau_{j=1}^p$ while $\mat V,\mat W$ are equivalent representations of the latent LMC linked by the 1 to 1 transformation $\mat V=\mat A\mat W$. In the example we present below, we put independent Gaussian priors on the elements of the coregionalization matrix $\mat A$. We use the exponential correlation function $\rho(\varphi_j|\mat s-\mat t|) = \exp(-\varphi_j|\mat s-\mat t|)$ for the underlying process $w_j(\cdot)$ and the range parameters $\varphi_1,\dots,\varphi_p$ are assigned uniform priors on $[3,30]$. Roughly speaking, this corresponds to a practical range at distances between 0.1 and 1 (points are observed in the unit square). Finally, we assign independent inverse Gamma priors with high variance to the parameters $\tau_1,\dots,\tau_p$ to obtain conjugate full conditional updates. 

The global MCMC sampling algorithm we employ is described below. It interweaves updates for $\mat A$ where $\mat V$ and $\mat W$ are conditioned on in turn. 
\begin{algo}[Interweaving MCMC for $\mat A$] {\color{white} help me}
\label{interMCMC}
\begin{enumerate}
    \item Update $\mat V,\mat W$ conditional on $\mat A, \varphi_{j=1}^p, \tau_{j=1}^p$ such that the transition leaves the full conditional invariant:
    \begin{itemize}
        \item Update $\mat V$ (with $\mat A, \varphi_{j=1}^p, \tau_{j=1}^p$ fixed) from full conditionals as detailed in Appendix \ref{datAugStep}, 
        \item Update $\mat W$ conditional on $\mat V, \mat A, \varphi_{j=1}^p, \tau_{j=1}^p$ using the deterministic relation $\mat W = \mat A^{-1}\mat V$.
    \end{itemize}
    \item Update $\mat A,\mat W$ conditional on $\mat V, \varphi_{j=1}^p, \tau_{j=1}^p$ such that the transition leaves the full conditional invariant:
    \begin{itemize}
        \item Update $\mat A$ (with $\mat V, \varphi_{j=1}^p, \tau_{j=1}^p$ fixed) using a slice sampling step \citep{neal2003slice,murray2010slice}, 
        \item Update $\mat W$ conditional on $\mat V, \mat A, \varphi_{j=1}^p, \tau_{j=1}^p$ using the deterministic relation $\mat W = \mat A^{-1}\mat V$.
    \end{itemize}
    \item Update $\mat A,\mat V$ conditional on $\mat W, \varphi_{j=1}^p, \tau_{j=1}^p$ from its exact distribution:
    \begin{itemize}
        \item Update $\mat A$ (with $\mat W, \varphi_{j=1}^p, \tau_{j=1}^p$ fixed) from its exact $p^2$-dimensional Gaussian distribution,
        \item Update $\mat V$ conditional on $\mat W, \mat A, \varphi_{j=1}^p, \tau_{j=1}^p$ using the deterministic relation $\mat V = \mat A\mat W$.
    \end{itemize}
    \item Update $\varphi_{j=1}^p$ conditional on $\mat V, \mat W, \mat A, \tau_{j=1}^p$ using a Metropolis-Hastings move centered at the current values. 
    \item Update $\tau_{j=1}^p$ conditional on $\mat V, \mat W, \mat A, \varphi_{j=1}^p$ from their conjugate inverse gamma distributions.
\end{enumerate}
\end{algo}

In Algorithm \ref{interMCMC}, components $\mat V, \mat W$ and $\mat A$ are sampled at multiple steps of the algorithm. This is referred to as marginalization in collapsed Gibbs samplers theory \citep{van2008partially}. We could omit either of steps 2 or 3 and still have an MCMC algorithm that leaves the joint posterior invariant. If we remove step 2, $\mat A$ is only updated conditional on $\mat W$ and if we remove step 3, $\mat A$ is only updated conditional on $\mat V$. Hence, the two situations correspond respectively to the whitened and centered parametrizations.

We are interested in comparing the interweaving approach with the more standard whitened ($\mat W$ parametrization) and centered ($\mat V$ parametrization) updates in terms of autocorrelation in the posterior samples relating to the matrix $\mat A$. As detailed in Section \ref{genLMC}, the elements of $\mat A$ are prone to both label and sign switching. We instead look at identifiable quantities: the cross-covariances $C_{12}(0) = a_{11} a_{21} + a_{12} a_{22}$ and $C_{12}(0.1) = a_{11} a_{21} \exp(-0.1\varphi_1) + a_{12} a_{22}\exp(-0.1\varphi_2)$ at distances 0 and 0.1 respectively. The first is a function of $\mat A$ while the second also depends on range parameters $\varphi_1$ and $\varphi_2$.

We chose a set of $n=200$ locations randomly and uniformly on the unit square $[0,1]^2$. Importantly, this is done only once and before performing any simulation to not introduce extra variability in the study. Then for each method and StN ratio value, the following is repeated 100 times. We instantiate a $p=2$ dimensional random field at said locations. We use $\mat A = [[-1.5,1.0]^\top,[1.1,2.0]^\top]$ and $[\varphi_1,\varphi_2] = [5,20]$. For the case StN $=1$, we add independent observational noise of variance $\tau_1 = \tau_2 = 1$ at every location; for StN $=0.1$ and StN $=10$ we multiply the noise variables $\mat \epsilon (\mat s)$ by $\sqrt{10}$ and $\sqrt{0.1}$ respectively. We run Algorithm \ref{interMCMC} (omitting step 2 for whitened updates and step 3 for centered updates) for 2000 iterations, discarding the first 1000 as burn-in. Finally, we evaluate the auto-correlation in the Markov chain of both quantities of interest $C_{12}(0)$ and $C_{12}(0.1)$ using the effective sample size metric. This is computed using the CODA package in R \citep{plummer2006coda}. Results are presented in Figure \ref{effSamp}.

\begin{figure}[h]
     \centering
     
        \includegraphics[width=0.8\textwidth]{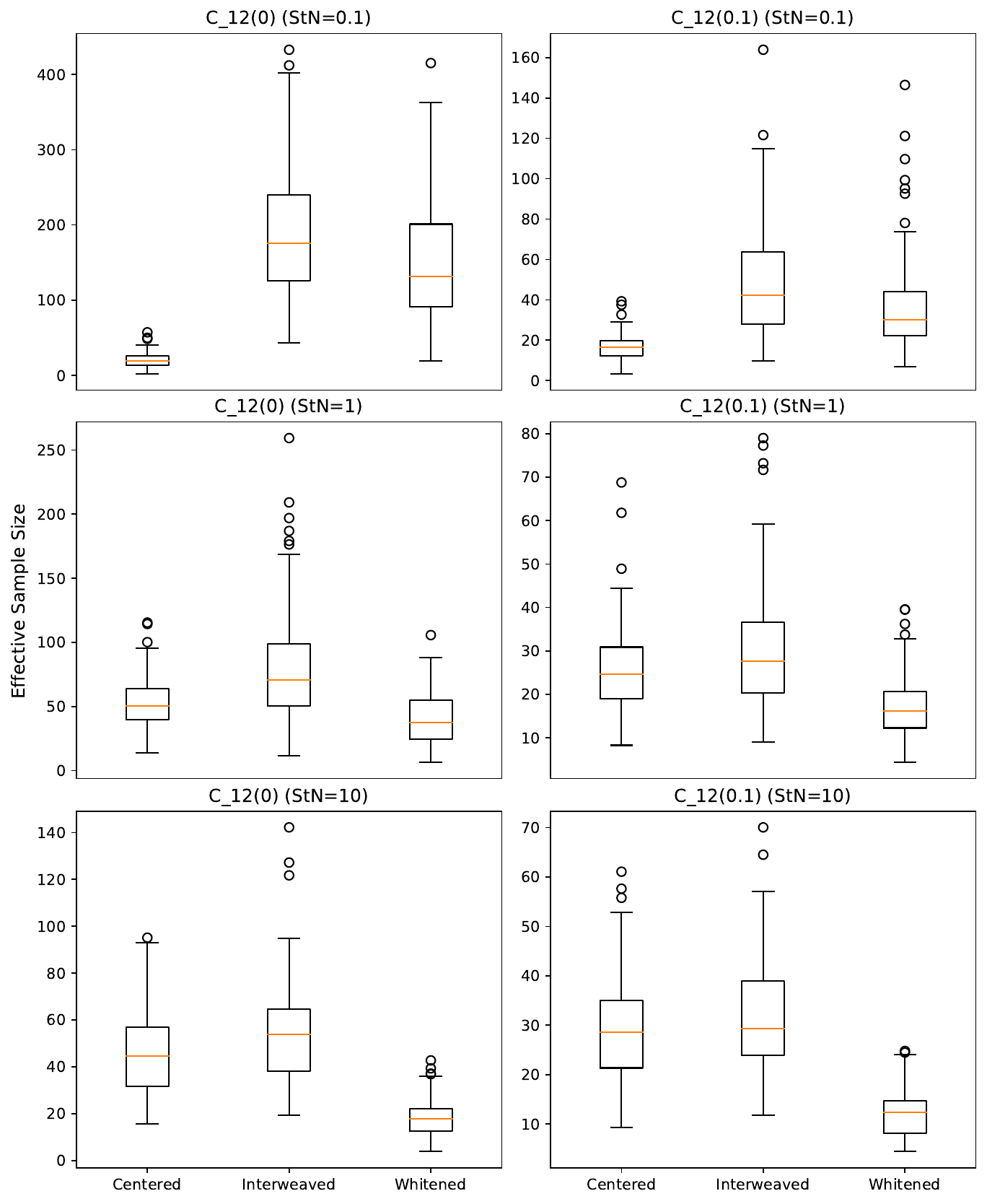}
        \caption{Results of the simulation study described in Appendix \ref{interSec}. Each boxplot contains 100 samples obtained by conducting inference on a new set of observations each time. The effective sample size is calculated for the two Markov chains of length $1000$ corresponding to quantities of interest $C_{12}(0)$ and $C_{12}(0.1)$. The average computing time for the centered, interweaved and whitened algorithms were respectively 33.46, 33.60 and 31.77 seconds.}
        \label{effSamp}
\end{figure}

As expected, the whitened (resp. centered) approach performs poorly when the StN ratio is low (resp. high). The interweaving approach seems to always perform at least slightly better than the most suitable of the other two methods. In practice, we do not know in advance how process level and observational variances compare. Interweaving, as was the objective in its conceptualization \citep{yu2011center}, can benefit from the advantages of both worlds, eliminating the need to decide upon an appropriate parametrization. 

It is important to emphasize that interweaving samples come at no added computational cost. For all practical purposes, computing time is identical across the three methods (about 30 seconds per chain of length 2000). This is because the computation is dominated by steps 1 and 4 of Algorithm \ref{interMCMC} which require respectively $\cO(p^2n^2)$ (see Appendix \ref{datAugStep}) and $\cO(pn^3)$ floating point operations (flops) (inverse and determinant of the $n \times n$ matrices $\mat R_j,j=1,\dots,p$). In the vectorized interpretation of the LMC, updating the matrix $\mat A$ in the centered parametrization would involve computing the inverse and determinant of the $np \times np$ covariance matrix $\sum_{j=1}^p \mat R_j \otimes \mat a_j \mat a_j^\top$. 

A consequence of Proposition \ref{LMCdensRes} is that computations on the coregionalization matrix $\mat A$ and the spatial correlation matrices $\mat R_1,\dots,\mat R_p$ are disentangled when evaluating the likelihood. Updates to $\mat A$ are negligible when compared to updates of spatial ranges $\varphi_1, \dots, \varphi_p$. Nevertheless, there are $p^2$ coregionalization parameters compared to $p$ spatial parameters, hence being able to use more sophisticated samplers on $\mat A$ (at no additional cost) improves the convergence rate of the MCMC algorithm. 

\section{Instantiating the Latent GP} \label{datAugStep}
 In Section \ref{hierModels}, we described the common issue regarding computational scalability in latent Gaussian process models. We first do a high-level recap here. Suppose you have an additive Gaussian model of the form $\mat y = \mat v + \mat \epsilon$ where $\mat y$ is observed, $\mat v$ has a multivariate normal distribution accounting for dependence among observations (such as spatial correlation accounting for proximity) and $\mat \epsilon$ represents observational noise that is independent across observations.

 We assume $\mat v$ to have the $n$-dimensional normal distribution with mean 0 and covariance matrix $\mat \Sigma$. In general, evaluating the density of $\mat v$ requires $\cO(n^3)$ flops to compute both the determinant and inverse of $\mat \Sigma$. We might be tempted to use an approximate model instead or impose some structure on $\mat \Sigma$ such that both $\mat \Sigma^{-1}$ and $\det(\mat \Sigma)$ are easier to compute. The issue is that the marginal model for $\mat y$ has covariance matrix $\mat \Sigma + \mat D$ where $\mat D$ is the diagonal covariance matrix associated with $\mat \epsilon$. Even if we employ a scalable model for $\mat \Sigma$, there is no simple way to extend such a strategy for computing the inverse and determinant of the covariance matrix of $\mat y$. In other words, there is no straightforward link between the easier computed $\mat \Sigma^{-1}$ and $\det(\mat \Sigma)$ and the analogous for $\mat \Sigma + \mat D$ because $\mat D$ is of full rank.

Even though $\mat v$ cannot easily be marginalized, there is always the option to rely on data augmentation and instantiate this latent vector at every step of the Markov chain. In this case, we work with the complete data likelihood given by
\begin{align}
p(\mat y,\mat v) = \frac{\exp(-\frac12 [\mat v^\top \mat \Sigma^{-1} \mat v + (\mat y - \mat v)^\top\mat D^{-1}(\mat y - \mat v)])}{(2\pi)^{n/2} \det(\mat \Sigma)^{1/2}\det(\mat D)^{1/2}} \label{jointNorm}
\end{align}
which is more amenable to likelihood-based inference since it only involves the inverse and determinant of the well-behaved $\mat \Sigma$ matrix and the diagonal $\mat D$.

We are left with only the matter of instantiating $\mat v$ conditional on $\mat y$. From standard manipulations of equation \eqref{jointNorm}, which is proportional to the conditional, we conclude that $\mat v|\mat y$ is multivariate normal with mean $(\mat \Sigma^{-1}+\mat D^{-1})^{-1}\mat D^{-1}\mat y$ and variance $(\mat \Sigma^{-1}+\mat D^{-1})^{-1}$. We again deal with matrices of size $n \times n$ which we need to invert and, in this case, factorize to simulate this $n$-dimensional Gaussian. In terms of asymptotic complexity, it renders pointless our attempt to use a scalable model for $\mat v$ as we would still need to perform $\cO(n^3)$ computations.

We have the option of updating $\mat v$ using some variant of Metropolis-Hastings. Acceptance ratios are readily computed using \eqref{jointNorm}. However, $\mat v$ is unlikely to move a lot in high-dimensional settings unless careful attention is put into tailoring and tuning the proposals for the specific problem at hand. A more universal approach is to simulate the values of $\mat v$ one at a time from the full conditionals described by
\begin{align}
\var (v_i|\mat v_{-i}, \mat y) &= \frac{1}{\{\mat \Sigma^{-1}\}_{ii} + \mat D_{ii}^{-1}}, \label{fullConVar}\\
E[v_i|\mat v_{-i}, \mat y] &= \var (v_i|\mat v_{-i}, \mat y)\left(\mat D_{ii}^{-1} y_i - \sum_{j\neq i} \{\mat \Sigma^{-1}\}_{ij}v_j\right).\label{fullCondMean}
\end{align}

Variance is directly obtained from $\mat \Sigma^{-1}$ and $\mat D^{-1}$ while the mean is obtained from a computation linear in $n$. Repeat that for every $i=1,\dots,n$ and $\mat v$ is updated in $\cO(n^2)$ flops. 

In the case of the LMC, we can explicitly compute the precision matrix $\sum_{j=1}^p \mat R_j^{-1} \otimes \mat a_j^{-\top} \mat a_j^{-1}$ of the $np$-dimensional vector $\mat v = \vec  (\mat V)$ and then perform the update from full conditionals in $\cO(p^2n^2)$ complexity. However, we do not require computing the inverse of the $np \times np$ covariance matrix to evaluate the likelihood (equation \eqref{LMCdens}). Doing so entails $\cO(p^3n^2)$ operations and therefore would dominate the spatial latent effect update.

We instead update the latent $p \times n$ matrix $\mat V = (\mat v(\mat s_1),\dots,\mat v(\mat s_n))$ column by column using the $p$-dimensional Gaussian full conditionals. From the joint likelihood of $\mat V$ and $\mat Y$ in equation \eqref{compCent}, we can derive
\begin{align}
\var(\mat v(\mat s_i)|\mat V_{-i}) &= \left ( \sum_{j=1}^p   \left [\mat R_{j}^{-1}  \right]_{ii}\mat a_j^{-\top} \mat a_j^{-1} + \mat D^{-1} \right )^{-1},\label{fcLmcVar}\\
E[\mat v(\mat s_i)|\mat V_{-i}] &= \var(\mat v(\mat s_i)|\mat V_{-i}) \left ( \mat D^{-1} \mat y(\mat s_i) - \sum_{k\neq i}\sum_{j=1}^p  \left [\mat R_{j}^{-1}  \right]_{ik}\mat a_j^{-\top} \mat a_j^{-1} \mat v(\mat s_k) \right ), \label{fcLmcMean}
\end{align}
where $\mat D$ is the $p \times p$ diagonal matrix with elements $\tau_i, i=1,\dots,p$. The $p$-dimensional full conditional update described by equations \eqref{fcLmcVar} and \eqref{fcLmcMean} above is accomplished in $\cO(p^2n)$ operations provided we start by computing the inner products $\mat a_j^{-1} \mat v(\mat s_k)$ in the mean structure. Consequently, the global column-by-column update to the matrix $\mat V$ has $\cO(p^2n^2)$ complexity.

A common occurrence in multivariate spatial modeling is the presence of missing entries in the matrix $\mat Y$ of observations. In this case, we can still fully instantiate the matrix $\mat V$ conditional on the available values in $\mat Y$. It necessitates a slight adaptation of the procedure outlined above. From a numerical standpoint, it will be simpler to consider the matrix $\mat Y$ to be full of scalar values. We shall input arbitrary values in place of the missing entries. We employ a $p \times n$ binary masking matrix $\mat M = [\mat m (\mat s_1),\dots,\mat m(\mat s_n) ]$ to encode (with 0s) the position of missing values. In the complete likelihood of $\mat Y,\mat V$ in equation \eqref{compCent}, the influence of $\mat Y$ acts through the sum of quadratic products described by
\begin{align}
\trace\{(\mat V - \mat Y)^\top \mat D^{-1} (\mat V - \mat Y)\} = \sum_{i=1}^n (\mat v(\mat s_i) - \mat y(\mat s_i))^\top \mat D^{-1} (\mat v(\mat s_i) - \mat y(\mat s_i)). \label{quads}
\end{align}
To discount elements that would involve a value $y_j(\mat s_i)$ that is missing for some $j$ in $1,\dots,p$, we can simply replace the $p \times p$ matrix $\mat D^{-1}$ on the RHS of equation \eqref{quads} by $\mat D^{-1} \circ \diag \{\mat m(\mat s_i)\}$ where $\circ$ stands for the element-wise (Hadamard) product of two matrices. It means that we can simply replace $\mat D^{-1}$ by $\mat D^{-1} \circ \diag \{\mat m(\mat s_i)\}$ in equations \eqref{fcLmcVar} and \eqref{fcLmcMean} and otherwise carry out the updates to each vector $\mat v(\mat s_i)$ unchanged.

\section{Algorithms} \label{algosDetails}

In this section, we present the structure of the MCMC algorithms we use for both the regular LMC and our sparse version. Each algorithm can be decomposed in four steps corresponding to the sampling of the latent spatial matrix $\mat V$, covariance parameters $\mat A, \varphi_{j=1}^p$ and the observational variances $\tau_{j=1}^p$. We first present the MCMC procedure for sampling quantities of interest in a zero-mean standard LMC. 

\begin{algo}[Standard MCMC for LMC distributed random fields] {\color{white} help me}
\label{baseLMCalgo}
\begin{enumerate}
    \item Update $\mat V$ conditional on $\mat Y,\mat A, \varphi_{j=1}^p, \tau_{j=1}^p$ from full conditionals as detailed in Appendix \ref{datAugStep}. 

    \item Update $\mat A$ conditional on $\mat Y,\mat V, \varphi_{j=1}^p, \tau_{j=1}^p$ using a slice sampling step \citep{neal2003slice,murray2010slice}.

    \item Update $\varphi_{j=1}^p$ conditional on $\mat Y,\mat V, \mat A, \tau_{j=1}^p$ using a Metropolis-Hastings move centered at the current values. 
    \item Update $\tau_{j=1}^p$ conditional on $\mat Y,\mat V,  \mat A, \varphi_{j=1}^p$ from their conjugate inverse gamma distributions.
\end{enumerate}
\end{algo}

For the sparse LMC, we need to be able to perform updates that explore admissible models. We consider two types of reversible jump proposals: one that removes a structural zero and one that inserts a structural zero. The latter move needs to be valid in the sense that it does not render the coregionalization matrix $\mat A$ singular with probability one (see Section \ref{sparseApp}).

\begin{algo}[MCMC for the sparse LMC model] {\color{white} help me}
\label{sparseLMCalgo}
\begin{enumerate}
    \item Update $\mat V$ conditional on $\mat Y,\mat A, \varphi_{j=1}^p, \tau_{j=1}^p$ from full conditionals as detailed in Appendix \ref{datAugStep}. 

    \item Update $\mat A$ conditional on $\mat Y, \mat V, \varphi_{j=1}^p, \tau_{j=1}^p$:
    \begin{itemize}
        \item Perform $p$ updates to the structure of $\mat A$ (with $\mat V, \varphi_{j=1}^p, \tau_{j=1}^p$ fixed) using reversible jump proposals, 
        \item Update the non-zero values of $\mat A$ (with $\mat V, \varphi_{j=1}^p, \tau_{j=1}^p$ fixed) using a slice sampling step \citep{neal2003slice,murray2010slice}.
    \end{itemize}

    \item Update $\varphi_{j=1}^p$ conditional on $\mat Y,\mat V, \mat A, \tau_{j=1}^p$ using a Metropolis-Hastings move centered at the current values. 
    \item Update $\tau_{j=1}^p$ conditional on $\mat Y, \mat V, \mat A, \varphi_{j=1}^p$ from their conjugate inverse gamma distributions.
\end{enumerate}
\end{algo}

Under the complete data likelihood of $\mat Y,\mat V$ described by equation \eqref{compCent}, both algorithms run at about the same speed. This is assuming the number $n$ of observations is considerably larger than the number $p$ of components. First, let us consider the form of the likelihood ratio upon an update to the range parameter $\varphi_{j=1}^p$. The move proposes to replace each $n \times n$ spatial correlation matrix $\mat R_j$ with a new $\mat R'_j$ for $j=1,\dots,p$. The computational bottleneck of this update is the following likelihood ratio which is necessary for determining the accept rule:
\begin{align}
    \frac{\cL(\mat R_1,\dots,\mat R_p;\mat V)}{\cL(\mat R'_1,\dots,\mat R'_p;\mat V)} = 
\exp \left[ -\frac{1}{2} \sum_{j=1}^p \mat a_j^{-1}\mat V  (\mat R_j^{-1}-{\mat R'_j}^{-1}) \mat V^\top \mat a_j^{-\top}\right] \prod_{j=1}^p \left[\frac{\det(\mat R'_j)}{\det(\mat R_j)}\right]^{\frac 12}.\label{lrR}
\end{align}
The update requires to invert and compute the determinant of the $p$ $n \times n$ correlation matrices 
$\mat R'_j,j=1,\dots,p$ for a calculation that scales as $\cO(pn^3)$. On the other hand, an update to $\mat A$ requires the likelihood ratio described by 
\begin{align}
\frac{\cL(\mat A;\mat V)}{\cL(\mat A';\mat V)} = \etr \left[ -\frac{1}{2} \sum_{j=1}^p   (\mat a_{j}^{-\top}\mat a_{j}^{-1}-{\mat a'_{j}}^{-\top}{\mat a'_{j}}^{-1})\mat V \mat R_j^{-1} \mat V^\top \right] \left |\frac{\det(\mat A')}{\det(\mat A)} \right | ^n,\label{lrA}
\end{align}
where $\etr(\cdot) := \exp(\trace(\cdot))$. When we calculate the quantity above, the inverse matrices $\mat R_j^{-1}$ are already computed and stored. Empirically, the fact that we perform multiple updates to the matrix $\mat A$, both in the slice sampling step and the reversible jumps proposals, has little effect on the global computation. This is because the time required to compute the likelihood ratio upon an update to $\mat A$ (equation \eqref{lrA}) is negligible when compared with the likelihood ratio in \eqref{lrR}. If we were to consider the $np$-dimensional Gaussian distribution of the vector $\vec(\mat V)$ as is typically the case in the multivariate spatial literature, then an update to $\mat A$ or the range parameters $\varphi_{j=1}^p$ would entail a change in the $np \times np$ covariance matrix $\sum_{j=1}^p \mat R_j \otimes \mat a_j \mat a_j^\top$ and its inverse and determinant would need to be recomputed, which is very expensive. 

\section{MCMC Setup} \label{MCMCsetup}

In this section, we describe in more detail the prior distributions employed in the analysis of Sections \ref{simStud} and \ref{anCaliData}. We also provide the likelihood functions used and the target posterior distribution in each case. 

\subsection*{Section \ref{simStud}: Simulation Study on the Sparse LMC}
\subsubsection*{Prior Distributions}

\begin{itemize}
    \item Elements of the coregionalization matrix $\mat A$ are assigned independent normal priors with mean 0 and variance equal to 1: $p(\mat A) \propto \prod_{j=1}^p\prod_{i=1}^p \exp(-a_{ij}^2/2)$.
    \item The range parameters $\varphi_{j=1}^p$ are assigned independent uniform priors on the [3,30] interval: $p(\varphi_{j=1}^p) \propto \prod_{j=1}^p \ind_{[3,30]}(\varphi_j)$.
    \item The observational variances $\tau_{j=1}^p$ are assigned independent inverse gamma prior distributions with shape and scale parameters equal to 1: $p(\tau_{j=1}^p) \propto \prod_{j=1}^p \tau_j^{-2} \exp(-1/\tau_j) $.
    \item For the masking matrix $\mat M$ (sparse LMC), we employ the prior distribution (we use $\pi = 1/p$) described in Section \ref{sparseApp}: $p(\mat M) \propto (1/p)^{\sum m_{ij}} (1-1/p)^{p^2 - \sum m_{ij}}$. 
\end{itemize}

\subsubsection*{Likelihood}
We make use of the complete data likelihood consisting of the matrix $\mat Y$ of observations and the matrix $\mat V$ of latent spatial effects:
\begin{align*}
p(\mat Y, \mat V | \mat A, \mat R_{j=1}^p, \mat D) = \frac{\exp[-\frac{1}{2} \sum_{j=1}^p \mat a_j^{-1}\mat V \mat R_j^{-1} \mat V^\top \mat a_j^{-\top} -\frac{1}{2} \trace\{(\mat V - \mat Y)^\top \mat D^{-1} (\mat V - \mat Y)\}]}{ (2\pi)^{np} |\det(\mat A)|^{n} \prod_{j=1}^p \det(\mat R_j)^{1/2}\prod_{j=1}^p \tau_j^{n/2}}. 
\end{align*}
The vectors $\mat a_j^{-\top},j=1,\dots,p$ are the rows of the inverse matrix $\mat A^{-1}$. The matrices $\mat R_j,j=1,\dots,p$ are the correlation matrices computed respectively from each range parameter $\varphi_{j=1}^p$ using the exponential correlation function. Finally, the matrix $\mat D$ is $p \times p$ diagonal with each term corresponding to the observational variances $\tau_{j=1}^p$. In the case of the sparse LMC, we replace the coregionalization matrix $\mat A$ by the element-wise product $\mat A \circ \mat M$.

\subsubsection*{Posterior Distribution}

The target posterior distribution is proportional to the product of priors with the complete data density:
$$
p(\mat V, \mat A, \varphi_{j=1}^p, \tau_{j=1}^p| \mat Y) \propto p(\mat Y, \mat V | \mat A, \mat R_{j=1}^p, \mat D) p(\mat A) p(\varphi_{j=1}^p) p(\tau_{j=1}^p).
$$
For the sparse LMC, we also sample among admissible models described by the masking matrix $\mat M$:
$$
p(\mat V, \mat A, \mat M, \varphi_{j=1}^p, \tau_{j=1}^p| \mat Y) \propto p(\mat Y, \mat V | \mat A \circ \mat M, \mat R_{j=1}^p, \mat D) p(\mat A) p(\varphi_{j=1}^p) p(\tau_{j=1}^p)p(\mat M).
$$

\subsection*{Section \ref{anCaliData}: Analysis of the California Air Data}
\subsubsection*{Prior Distributions}

\begin{itemize}
    \item Elements of the coregionalization matrix $\mat A$ are assigned independent normal priors with mean 0 and variance equal to 1: $p(\mat A) \propto \prod_{j=1}^p\prod_{i=1}^p \exp(-a_{ij}^2/2)$.
    \item The range parameters $\varphi_{j=1}^p$ are assigned independent uniform priors on the [3,30] interval: $p(\varphi_{j=1}^p) \propto \prod_{j=1}^p \ind_{[3,30]}(\varphi_j)$.
    \item The observational variances $\tau_{j=1}^p$ are assigned independent inverse gamma prior distributions with shape and scale parameters equal to 1: $p(\tau_{j=1}^p) \propto \prod_{j=1}^p \tau_j^{-2} \exp(-1/\tau_j) $.
    \item Values of component-specific mean vector $\mat \mu$ are each assigned independent normal priors with mean 0 and variance equal to 10: $p(\mat \mu) = \prod_{j=1}^p \exp(-\mu_{j}^2/20) $.
    \item For the masking matrix $\mat M$, we employ the prior distribution (we use $\pi = 1/p$) described in Section \ref{sparseApp}: $p(\mat M) \propto (1/p)^{\sum m_{ij}} (1-1/p)^{p^2 - \sum m_{ij}}$. 
\end{itemize}

\subsubsection*{Likelihood}
We make use of the complete data likelihood consisting of the matrix $\mat Y$ of observations and the matrix $\mat V$ of latent spatial effects:
\begin{align*}
p(\mat Y, \mat V |\mat \mu, \mat A, \mat R_{j=1}^p, \mat D) = \frac{\exp[-\frac{1}{2} \sum_{j=1}^p \mat a_j^{-1}(\mat V - \mat \mu\mat 1_n^\top) \mat R_j^{-1} (\mat V - \mat \mu\mat 1_n^\top)^\top \mat a_j^{-\top} -\frac{1}{2} \trace\{(\mat V - \mat Y)^\top \mat D^{-1} (\mat V - \mat Y)\}]}{ (2\pi)^{np} |\det(\mat A)|^{n} \prod_{j=1}^p \det(\mat R_j)^{1/2}\prod_{j=1}^p \tau_j^{n/2}}. 
\end{align*}
The vectors $\mat a_j^{-\top},j=1,\dots,p$ are the rows of the inverse matrix $\mat A^{-1}$. The matrices $\mat R_j,j=1,\dots,p$ are the correlation matrices computed respectively from each range parameter $\varphi_{j=1}^p$ using the exponential correlation function. Finally, the matrix $\mat D$ is $p \times p$ diagonal with each term corresponding to the observational variances $\tau_{j=1}^p$. We replace the coregionalization matrix $\mat A$ by the element-wise product $\mat A \circ \mat M$. The $j^{\text {th}}$ row of $p \times n$ matrix $\mat \mu\mat 1_n^\top$ is filled with the value $\mu_j,j=1,\dots,p$.

\subsubsection*{Posterior Distribution}

The target posterior distribution is proportional to the product of priors with the complete data density:
$$
p(\mat V,\mat \mu, \mat A, \mat M, \varphi_{j=1}^p, \tau_{j=1}^p| \mat Y) \propto p(\mat Y, \mat V | \mat A \circ \mat M, \mat R_{j=1}^p, \mat D) p(\mat A) p(\varphi_{j=1}^p) p(\tau_{j=1}^p) p(\mat \mu)p(\mat M).
$$

\end{document}